\newcommand{\remove}[1]{}
\renewcommand{\int}{int}
\renewenvironment{proof}
{{\bf Proof:}}{\hspace*{\fill}$\Box$\par\vspace{2mm}}
\begin{document}
\title{Drawing Graphs as Spanners\thanks{This project has received funding from the European Union's Horizon 2020 research and innovation programme under the Marie Sk{\l}odowska-Curie grant agreement No 734922, the Natural Sciences and Engineering Research Council of Canada, and by MIUR Projects ``MODE'' under PRIN 20157EFM5C and ``AHeAD'' under PRIN 20174LF3T8.}}
\author{Oswin Aichholzer$^1$ \and Manuel Borrazzo$^2$ \and Prosenjit Bose$^3$ \and Jean Cardinal$^4$ \and \\Fabrizio Frati$^2$ \and Pat Morin$^3$ \and Birgit Vogtenhuber$^1$}
\institute{$^1$Institute for Software Technology,  Graz University of Technology, Graz, Austria\\\email{\{bvogt,oaich\}@ist.tugraz.at}\\
$^2$Roma Tre University, Rome, Italy \\\email{\{manuel.borrazzo,fabrizio.frati\}@uniroma3.it}\\
$^3$School of Computer Science, Carleton University, Ottawa, Canada\\\email{\{jit,morin\}@scs.carleton.ca}\\
$^4$Computer Science Department, Universit\'e Libre de Bruxelles (ULB), Brussels, Belgium\\\email{jcardin@ulb.ac.be}}
\authorrunning{O. Aichholzer \etal}
\maketitle

\begin{abstract}
	We study the problem of embedding graphs in the plane as good geometric spanners. That is, for a graph $G$, the goal is to construct a straight-line drawing $\Gamma$ of $G$ in the plane such that, for any two vertices $u$ and $v$ of $G$, the ratio between the minimum length of any path from $u$ to $v$ and the Euclidean distance between $u$ and $v$ is small. The maximum such ratio, over all pairs of vertices of~$G$, is the \emph{spanning ratio} of $\Gamma$.
	
	First, we show that deciding whether a graph admits a straight-line drawing with spanning ratio~$1$, a proper straight-line drawing with spanning ratio~$1$, and a planar straight-line drawing with spanning ratio~$1$ are NP-complete, $\exists \mathbb R$-complete, and linear-time solvable problems, respectively, where a drawing is proper if no two vertices overlap and no edge overlaps a vertex.
	
	Second, we show that moving from spanning ratio $1$ to spanning ratio $1+\epsilon$ allows us to draw every graph. Namely, we prove that, for every $\epsilon>0$, every (planar) graph admits a proper (resp.\ planar) straight-line drawing with spanning ratio smaller than~$1+\epsilon$.
	
	Third, our drawings with spanning ratio smaller than~$1+\epsilon$ have large edge-length ratio, that is, the ratio between the length of the longest edge and the length of the shortest edge is exponential. We show that this is sometimes unavoidable. More generally, we identify having bounded toughness as the criterion that distinguishes graphs that admit straight-line drawings with constant spanning ratio and polynomial edge-length ratio from graphs that require exponential edge-length ratio in any straight-line drawing with constant spanning ratio.
\end{abstract}

\section{Introduction}

Let $P$ be a set of points in the plane and let $\mathcal G$ be a geometric graph whose vertex set is~$P$. We say that $\mathcal G$ is a \emph{$t$-spanner} if, for every pair of points $p$ and $q$ in~$P$, there exists a path from $p$ to $q$ in $\mathcal G$ whose total edge length is at most $t$ times the Euclidean distance $\|pq\|$ between $p$ and $q$. The \emph{spanning ratio} of $\mathcal G$ is the smallest real number $t$ such that $\mathcal G$ is a $t$-spanner. The problem of constructing, for a given set $P$ of points in the plane, a sparse (and possibly planar) geometric graph whose vertex set is $P$ and whose spanning ratio is small has received considerable attention; see, e.g.,~\cite{bdlsv-adt-11,bfvv-cr-12,bfvv-pcbds-19,c-tpg-89,dfs-dg-90,dg-lbd-16,m-mdt-04}. We cite here the fact that the Delaunay triangulation of a point set has spanning ratio at least 1.593~\cite{xz-ttb-11} and at most 1.998~\cite{x-sfd-13}, and refer to the survey of Bose and Smid~\cite{bs-pgs-13} for more results. 

In this paper we look at the construction of geometric graphs with small spanning ratio from a different perspective. Namely, the problem we consider is whether it is possible to embed a given abstract graph in the plane as a geometric graph with small spanning ratio. That is, for a given graph, we want to construct a straight-line drawing with small spanning ratio, where the spanning ratio of a straight-line drawing is the maximum ratio, over all pairs of vertices $u$ and $v$, between the total edge length of a shortest path from $u$ to $v$ and $\|uv\|$.

Graph embeddings in which every pair of vertices is connected by a path satisfying certain geometric properties have been the subject of intensive research. As the most notorious example, a \emph{greedy} drawing of a graph~\cite{adf-sgd-12,afg-acgdt-10,ddf-pg-17,eg-sggruhg-11,hz-osgdptt-14,lm-srgems-10,np-egdt-13,pr-crgr-05,rpss-grwli-03,wh-sscgd3pg-14} is such that, for every pair of vertices $u$ and $v$, there is a path from $u$ to $v$ that monotonically decreases the distance to $v$ at every vertex. More restricted than greedy drawings are \emph{self-approaching} and \emph{increasing-chord} drawings~\cite{acglp-sag-12,dfg-icgps-15,npr-osaicd-16}. In a self-approaching drawing, for every pair of vertices $u$ and $v$, there is a \emph{self-approaching} path from $u$ to $v$, i.e., a path $P$ such that $\|ac\|>\|bc\|$, for any three points $a$, $b$, and $c$ in this order along $P$; in an increasing-chord drawing, for every pair of vertices $u$ and $v$, there is a path from $u$ to $v$ which is self-approaching both from $u$ to $v$ and from $v$ to $u$. Even more restricted are \emph{angle-monotone} drawings~\cite{bbcklv-gt-16,dfg-icgps-15,lm-clr-19} in which, for every pair of vertices $u$ and $v$, there is a path from $u$ to $v$ such that the angles of any two edges of the path differ by at most $90^\circ$. Finally, \emph{monotone} drawings~\cite{acdfp-mdg-12,adk-mdt-15,hh-omdt-17,hh-md3pg-15,hr-gstgd-15,kssw-omdt-14} and \emph{strongly-monotone} drawings~\cite{acdfp-mdg-12,fikkms-smdpg-16,kssw-omdt-14} require, for every pair of vertices $u$ and $v$, that a path from $u$ to $v$ exists that is monotone with respect to some direction or with respect to the direction of the straight line through $u$ and $v$, respectively. While greedy, monotone, and strongly-monotone drawings might have unbounded spanning ratio, self-approaching, increasing-chord, and angle-monotone drawings are known to have spanning ratio at most $5.34$~\cite{ikl-sac-99}, at most $2.1$~\cite{r-cic-94}, and at most $1.42$~\cite{bbcklv-gt-16}, respectively. However, not all graphs, and not even all trees~\cite{lm-srgems-10,np-egdt-13}, admit self-approaching, increasing-chord, or angle-monotone drawings.

Our results are the following.
\begin{itemize}
	\item First, we look at straight-line drawings with spanning ratio equal to $1$, which is clearly the smallest attainable value by any graph. We prove that deciding whether a graph admits a straight-line drawing, a proper straight-line drawing (in which no vertex-vertex or vertex-edge overlaps are allowed), and a planar straight-line drawing with spanning ratio $1$ are NP-complete, $\exists \mathbb R$-complete, and linear-time solvable problems, respectively.
	\item Second, we show that allowing each shortest path to have a total edge length slightly larger than the Euclidean distance between its end-vertices makes it possible to draw all graphs. Namely, we prove that, for every $\epsilon>0$, every graph admits a proper straight-line drawing with spanning ratio smaller than $1+\epsilon$ and every planar graph admits a planar straight-line drawing with spanning ratio smaller than $1+\epsilon$. 
	\item Third, we address the issue that our drawings with spanning ratio smaller than $1+\epsilon$ have poor resolution. That is, the \emph{edge-length ratio} of these drawings, i.e., the ratio between the lengths of the longest and of the shortest edge, might be super-polynomial in the number of vertices of the graph. We show that this is sometimes unavoidable, as stars have exponential edge-length ratio in any straight-line drawing with constant spanning ratio. More in general, we show that there exist graph families such that any straight-line drawing with constant spanning ratio has edge-length ratio which is exponential in the inverse of the toughness. On the other hand, we prove that graph families with constant toughness admit proper straight-line drawings with polynomial edge-length ratio and constant spanning ratio. Finally, we prove that trees with bounded degree admit planar straight-line drawings with polynomial edge-length ratio and constant spanning ratio.
\end{itemize}

\section{Preliminaries} \label{se:preliminaries}

For a graph $G$ and a set $S$ of vertices of $G$, we denote by $G-S$ the graph obtained from $G$ by removing the vertices in $S$ and their incident edges. The subgraph of $G$ \emph{induced by} $S$ is the graph whose vertex set is $S$ and whose edge set consists of every edge of $G$ that has both its end-vertices in $S$. For a vertex $v$, a \emph{$\{v\}$-bridge} of $G$ is the subgraph of $G$ induced by $v$ and by the vertices of a connected component of $G-\{v\}$. The \emph{toughness} of a graph $G$ is the largest real number $t>0$ such that, for every integer $k\geq 2$, $G$ cannot be split into $k$ connected components by the removal of fewer than $t\cdot k$ vertices; that is, for any set $S$ such that $G-S$ consists of $k\geq 2$ connected components, we have $|S|\geq t\cdot k$.  



A \emph{drawing} of a graph maps each vertex to a distinct point in the plane and each edge to a Jordan arc between its end-vertices. A drawing is \emph{straight-line} if it maps each edge to a straight-line segment. Let $\Gamma$ be a straight-line drawing of a graph $G$. The \emph{length of a path} in $\Gamma$ is the sum of the lengths of its edges. We denote by $\|uv\|_{\Gamma}$ (by $\pi_\Gamma(u,v)$) the Euclidean distance (resp.\ the length of a shortest path) between two vertices $u$ and $v$ in $\Gamma$; we sometimes drop the subscript $\Gamma$ when the drawing we refer to is clear from the context. The \emph{spanning ratio} of $\Gamma$ is the real value $\max\limits_{u,v} \frac{\pi_\Gamma(u,v)}{\|uv\|_{\Gamma}}$, where the maximum is over all pairs of vertices $u$ and $v$ of~$G$.

A drawing is \emph{planar} if no two edges intersect, except at common end-vertices. A planar drawing partitions the plane into connected regions, called \emph{faces}; the bounded faces are \emph{internal}, while the unbounded face is the \emph{outer face}. A graph is \emph{planar} if it admits a planar drawing. A planar graph is \emph{maximal} if adding any edge to it violates its planarity. In any planar drawing of a maximal planar graph every face is delimited by a $3$-cycle.

The \emph{bounding box} $\mathcal B(\Gamma)$ of a drawing $\Gamma$ is the smallest axis-parallel rectangle containing $\Gamma$ in the closure of its interior. We denote by $\mathcal B_l(\Gamma)$ and $\mathcal B_r(\Gamma)$ the left and right side of $\mathcal B(\Gamma)$, respectively. The \emph{width} and \emph{height} of $\Gamma$ are the width and height of $\mathcal B(\Gamma)$. 

\section{Drawings with Spanning Ratio $\bf 1$} \label{se:spanning-ratio-one}

In this section we study straight-line drawings with spanning ratio equal to~$1$. We characterize the graphs that admit straight-line drawings, proper straight-line drawings, and planar straight-line drawings with spanning ratio equal to~$1$ and, consequently, derive results on the complexity of recognizing such graphs. We start with the following.

\begin{lemma} \label{le:characterization-general}
	A graph admits a straight-line drawing with spanning ratio equal to $1$ if and only if it contains a Hamiltonian path. 
\end{lemma}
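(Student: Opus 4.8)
The plan is to prove the two directions separately, with the forward direction being essentially immediate and the reverse direction carrying all the weight.

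For the ``if'' direction, suppose $G$ has a Hamiltonian path $v_1, v_2, \ldots, v_n$. I would place $v_i$ at the point $(i,0)$, so that all vertices lie on a horizontal line in the order of the path. Then for any $i<j$ the subpath $v_i, v_{i+1}, \ldots, v_j$ is drawn as a chain of collinear segments covering $[v_i,v_j]$ exactly, so its length equals $\|v_iv_j\|$. Since the length of any path between two vertices is at least their Euclidean distance, this gives $\pi(v_i,v_j)=\|v_iv_j\|$ for every pair, hence spanning ratio exactly $1$. Overlaps are irrelevant here, as the drawing need not be proper.

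For the ``only if'' direction, the starting observation is that spanning ratio $1$ forces, for every pair $u,v$, a shortest path of length exactly $\|uv\|$; by the triangle inequality such a path can only be a monotone chain of vertices lying on the segment $[u,v]$. Two consequences drive the argument: first, if no vertex lies in the open segment $(a,b)$, then the shortest $a$--$b$ path is the single edge $ab$, so $ab\in E(G)$; second, the internal vertices of any such shortest path lie strictly inside its segment. I would then prove, by induction on $n$, the stronger statement that for every convex-hull vertex $p$ of the drawing, $G$ admits a Hamiltonian path with $p$ as an endpoint (the base case $n=1$ being trivial). The inductive step removes $p$: since $p$ is extreme it lies on no open segment between two other vertices, so every shortest path between the remaining vertices survives and the induced drawing of $G-p$ again has spanning ratio $1$.

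The crux is re-attaching $p$. For this I would fix a hull-neighbor $a$ of $p$ and let $q$ be the vertex closest to $p$ on the hull edge $[p,a]$ (so $q=a$ unless further vertices are collinear on that edge). Then $(p,q)$ is empty, giving the edge $pq\in E(G)$, and the supporting line of the hull along $[p,a]$ certifies that $q$ is a convex-hull vertex of $G-p$: after deleting $p$, the extreme points among the remaining collinear vertices on that line are exactly $q$ and $a$. Applying the induction hypothesis to $(G-p,q)$ yields a Hamiltonian path of $G-p$ starting at $q$, and prepending $p$ via the edge $pq$ produces the desired Hamiltonian path of $G$ starting at $p$; taking $p$ to be any hull vertex then proves the lemma. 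I expect the handling of collinear points on the hull boundary --- ensuring that the chosen neighbor $q$ is \emph{simultaneously} adjacent to $p$ and extreme in the reduced point set --- to be the only delicate part, since in general position every pair of vertices spans an empty segment and the argument collapses to a triviality.
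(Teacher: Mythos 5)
Your proof is correct, but the only-if direction takes a genuinely different and substantially heavier route than the paper's. The paper simply rotates the drawing so that all $x$-coordinates are distinct, labels the vertices $v_1,\dots,v_n$ by increasing $x$-coordinate, and notes that each edge $v_iv_{i+1}$ must be present: no vertex has $x$-coordinate strictly between those of $v_i$ and $v_{i+1}$, so no vertex lies in the open segment $(v_i,v_{i+1})$, and a path of length exactly $\|v_iv_{i+1}\|$ can then only be the edge itself. The sorted order is the Hamiltonian path, and the direction is done in three lines. Your induction on extreme points is sound --- deleting a hull vertex $p$ preserves spanning ratio $1$ because $p$ lies on no segment spanned by two other vertices, and your choice of $q$ on the hull edge $[p,a]$ correctly makes $q$ both adjacent to $p$ (empty open segment) and extreme in the reduced point set, even in the presence of collinear boundary points. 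It also yields the marginally stronger conclusion that the Hamiltonian path may start at any prescribed extreme point of the drawing; but that too falls out of the sorting argument by picking a generic sort direction in which the prescribed point comes first, so the extra machinery (the induction, the convex-hull bookkeeping, and the collinearity case analysis you rightly flag as the delicate part) buys nothing here. The if-direction of your proof is identical to the paper's.
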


\begin{proof}
	($\Longrightarrow$) Suppose that a graph $G$ admits a straight-line drawing $\Gamma$ with spanning ratio~$1$. Assume, w.l.o.g.\ up to a rotation of the Cartesian axes, that no two vertices of $G$ have the same $x$-coordinate in $\Gamma$. Let $v_1,v_2,\dots,v_n$ be the vertices of $G$, ordered by increasing $x$-coordinates. Then, for $i=1,2,\dots,n-1$, we have that $G$ contains the edge $v_iv_{i+1}$, as otherwise any path between $v_i$ and $v_{i+1}$ would be longer than $\|v_i v_{i+1}\|$. Hence, $G$ contains the Hamiltonian path $(v_1,v_2,\dots,v_n)$.    
	
	($\Longleftarrow$) A straight-line drawing with spanning ratio $1$ of a graph containing a Hamiltonian path $(v_1,v_2,\dots,v_n)$ can be constructed by placing $v_i$ at $(i,0)$, for $i=1,\dots,n$. 
\end{proof}

\begin{theorem} \label{th:complexity-general}
	Recognizing whether a graph admits a straight-line drawing with spanning ratio equal to~$1$ is an NP-complete problem.
\end{theorem}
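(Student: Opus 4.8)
The plan is to leverage Lemma~\ref{le:characterization-general}, which reduces the problem to recognizing whether a graph admits a Hamiltonian path. Since the Hamiltonian path problem is one of the classical NP-complete problems, I expect the theorem to follow almost immediately—but I must establish both NP-membership and NP-hardness carefully to handle the geometric subtleties.

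**For NP-membership**, the naive guess-and-verify approach has a catch: a certificate consisting of a straight-line drawing with spanning ratio~$1$ would require specifying the vertex coordinates, and a priori these could demand too many bits. So instead I would use the characterization directly: the certificate is a Hamiltonian path, which has size $O(n)$ and can be verified in polynomial time. By Lemma~\ref{le:characterization-general}, a graph admits the desired drawing if and only if it has a Hamiltonian path, so this certificate correctly witnesses membership. This sidesteps the representation issue entirely, since I never need to exhibit the drawing itself.

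**For NP-hardness**, I would give a trivial reduction from the Hamiltonian path problem. Given an instance $G$ of Hamiltonian path, I map it to the same graph $G$ as an instance of our recognition problem. By Lemma~\ref{le:characterization-general}, $G$ admits a straight-line drawing with spanning ratio~$1$ if and only if $G$ contains a Hamiltonian path, so the reduction is correct and computable in constant (indeed, zero) time. Since Hamiltonian path is NP-complete, hardness follows.

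**The main obstacle**, such as it is, lies entirely in the NP-membership argument rather than hardness: one must resist the temptation to certify by exhibiting a drawing, and instead observe that the combinatorial characterization from the preceding lemma gives a polynomial-size certificate for free. Once that is recognized, the whole proof is essentially a one-line invocation of the lemma in each direction, and I would expect the authors' proof to be correspondingly short.
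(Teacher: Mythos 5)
Your proposal is correct and follows the same route as the paper: the authors' proof is exactly the one-line invocation of Lemma~\ref{le:characterization-general} together with the NP-completeness of the Hamiltonian path problem. Your extra remark that NP-membership should be certified by the Hamiltonian path rather than by the drawing itself is a sensible elaboration of what the paper leaves implicit.
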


\begin{proof}
	The theorem follows by Lemma~\ref{le:characterization-general} and from the fact that deciding whether a graph contains a Hamiltonian path is an NP-complete problem~\cite{gj-np-79,gjt-phc-76}.
\end{proof}

A graph $G$ is a \emph{point visibility graph} if there exists a finite point set $P\subset \mathbb R^2$ such that: (i)~$G$ has a vertex for each point in $P$; and (ii) $G$ has an edge between two vertices if and only if the straight-line segment between the corresponding points does not contain any point of~$P$ in its interior; see~\cite[Chapter 15]{d-cg-97}. We have the following.

\begin{lemma} \label{le:characterization-proper}
	A graph admits a proper straight-line drawing with spanning ratio equal to $1$ if and only if it is a point visibility graph. 
\end{lemma}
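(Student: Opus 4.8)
The plan is to identify the point set of a proper spanning-ratio-$1$ drawing with the point set that witnesses the point-visibility-graph property, and to show that the two edge sets coincide. The single structural fact underlying both directions is the equality case of the triangle inequality: in any straight-line drawing, a path $(w_0,w_1,\dots,w_k)$ from $u=w_0$ to $v=w_k$ satisfies $\sum_{i=0}^{k-1}\|w_iw_{i+1}\|\geq\|uv\|$, with equality if and only if all of $w_0,\dots,w_k$ lie on the segment $\overline{uv}$ and appear along it in this order. Hence, a path whose length equals $\|uv\|$ is forced to be ``straight'', with every internal vertex lying in the interior of $\overline{uv}$. I would state this as the key observation at the outset, since every subsequent step reduces to it.

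For the forward direction, I would assume that $G$ has a proper straight-line drawing $\Gamma$ with spanning ratio $1$ and take $P$ to be the set of points to which $\Gamma$ maps the vertices of $G$; these points are distinct because $\Gamma$ is proper. It then suffices to argue that two vertices $u$ and $v$ are adjacent in $G$ exactly when the segment $\overline{uv}$ contains no point of $P$ in its interior. If $uv$ is an edge of $G$, then properness (no edge overlaps a vertex) guarantees that no other vertex lies in the interior of $\overline{uv}$. Conversely, if no vertex lies in the interior of $\overline{uv}$, then the spanning-ratio-$1$ hypothesis provides a shortest path from $u$ to $v$ of length exactly $\|uv\|$; by the key observation every internal vertex of this path would lie in the interior of $\overline{uv}$, so there are none, and the path is the single edge $uv$. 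Thus the adjacency in $G$ matches the visibility relation on $P$, and $G$ is a point visibility graph.

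For the reverse direction, I would start from a finite point set $P$ witnessing that $G$ is a point visibility graph and place each vertex of $G$ at its corresponding point, obtaining a straight-line drawing $\Gamma$. Properness is immediate: the points are distinct, and if some vertex $w$ lay in the interior of an edge $uv$, then $w\in P$ would lie in the interior of $\overline{uv}$, contradicting that $uv$ is a visibility edge. For the spanning ratio, given any two vertices $u$ and $v$ I would list the points of $P$ on $\overline{uv}$ as $u=p_0,p_1,\dots,p_k=v$ in the order in which they occur along the segment; by construction no point of $P$ lies in the interior of any $\overline{p_ip_{i+1}}$, so each $p_ip_{i+1}$ is a visibility edge and hence an edge of $G$. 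The path $(p_0,\dots,p_k)$ therefore lies in $G$ and, being collinear and monotone, has total length $\sum_{i=0}^{k-1}\|p_ip_{i+1}\|=\|uv\|$, which together with the triangle-inequality lower bound yields spanning ratio $1$.

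I do not expect a serious obstacle here: both implications are short once the equality case of the triangle inequality is isolated. The only point requiring care is the precise role of properness, namely that ``no edge overlaps a vertex'' is exactly what rules out a visibility edge passing through a third vertex and what forces the forward-direction adjacency and the visibility relation to agree; I would make sure to invoke it explicitly in both directions rather than leaving it implicit.
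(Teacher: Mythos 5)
Your proposal is correct and follows essentially the same route as the paper: identify the vertex point set with the witnessing point set, use properness to show edges have empty interiors, and use the equality case of the triangle inequality to show that a shortest path of length $\|uv\|$ must consist of the collinear vertices on $\overline{uv}$ in order. Your explicit isolation of the triangle-inequality equality case is a slightly cleaner packaging of the argument the paper makes in passing (``as otherwise the length of any path between $u$ and $v$ would be larger than $\|uv\|_\Gamma$''), but the substance is identical.
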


\begin{proof}
	($\Longrightarrow$) Suppose that a graph $G$ admits a proper straight-line drawing $\Gamma$ with spanning ratio $1$. Let $v_\Gamma$ be the point at which a vertex $v$ of $G$ is drawn in $\Gamma$. Let $P:=\{v_\Gamma \in \mathbb R^2|v\in V(G)\}$ and let $G_P$ be the point visibility graph of $P$. We claim that an edge $uv$ belongs to $G$ if and only if the edge $u_\Gamma v_\Gamma$ belongs to $G_P$; the claim implies that $G_P$ is isomorphic to~$G$ and hence that $G$ is a point visibility graph. 
	First, if $uv$ belongs to $G$, then $\Gamma$ contains the straight-line segment $\overline{u_\Gamma v_\Gamma}$. Since $\Gamma$ is proper, no vertex of $G$ lies in the interior of $\overline{u_\Gamma v_\Gamma}$, hence $G_P$ contains the edge $u_\Gamma v_\Gamma$. Conversely, if $G_P$ contains the edge $u_\Gamma v_\Gamma$, then no point in $P$ lies in the interior of the straight-line segment $\overline{u_\Gamma v_\Gamma}$. Hence, the edge $uv$ belongs to $G$, as otherwise the length of any path between $u$ and $v$ would be larger than $\|u v\|_\Gamma$.
	
	($\Longleftarrow$) Suppose that a graph $G$ is the visibility graph of a point set $P$. For any point $p\in P$, let $v_p$ be the corresponding vertex of $G$. Let $\Gamma$ be the straight-line drawing of $G$ that maps each vertex $v_p$ to the point $p$. Consider any edge $v_pv_q$ of $G$. No vertex $v_r$ lies in the interior of the straight-line segment $\overline{pq}$ in $\Gamma$, as otherwise $v_pv_q$ would not belong to $G$; it follows that $\Gamma$ is proper. Further, consider any two vertices $v_p$ and $v_q$ of $G$ and let $v_p=v_{r_1},v_{r_2},\dots,v_{r_k}=v_q$ be the sequence of vertices of $G$ lying on the straight-line segment $\overline{pq}$ in $\Gamma$, ordered as they occur from $p$ to~$q$. Then $G$ contains the path $(v_p=v_{r_1},v_{r_2},\dots,v_{r_k}=v_q)$, whose length in $\Gamma$ is $\|v_pv_q\|_{\Gamma}$. It follows that the spanning ratio of $\Gamma$ is $1$. 
\end{proof}

The \emph{existential theory of the reals} problem asks whether real values exist for $n$ variables such that a quantifier-free formula, consisting of polynomial equalities and inequalities on such variables, is satisfied. The class of problems that are complete for the existential theory of the reals is denoted by $\exists\mathbb R$~\cite{s-csgtp-09}. It is known that NP $\subseteq\exists\mathbb R\subseteq$ PSPACE~\cite{c-agcp-88}, however it is not known whether $\exists\mathbb R\subseteq$ NP. Many geometric problems are $\exists\mathbb R$-complete, see, e.g.,~\cite{aam-agp-18,m-ut-88}.

\begin{theorem} \label{th:complexity-proper}
	Recognizing whether a graph admits a proper straight-line drawing with spanning ratio equal to~$1$ is an $\exists\mathbb R$-complete problem.
\end{theorem}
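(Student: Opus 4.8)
The plan is to use Lemma~\ref{le:characterization-proper} to reduce the problem to the recognition of point visibility graphs, and then to prove that the latter problem is $\exists\mathbb R$-complete, establishing membership and hardness separately.

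For membership in $\exists\mathbb R$, I would encode a candidate realization of the input graph $G$ directly as a formula over the reals. Introduce, for each vertex $v_i$ of $G$, two real variables $x_i$ and $y_i$ representing its coordinates. The predicate ``$v_k$ lies in the interior of the segment $\overline{v_iv_j}$'' is expressible by the collinearity equality $(x_j-x_i)(y_k-y_i)-(y_j-y_i)(x_k-x_i)=0$ together with strict polynomial inequalities forcing $v_k$ to lie strictly between $v_i$ and $v_j$ (for instance $0<(x_k-x_i)(x_j-x_i)+(y_k-y_i)(y_j-y_i)<(x_j-x_i)^2+(y_j-y_i)^2$). Then each edge of $G$ imposes that no vertex lies in the interior of the corresponding segment, a finite conjunction over the remaining vertices, while each non-edge imposes that some vertex does, a finite disjunction; adding the distinctness constraints $(x_i-x_j)^2+(y_i-y_j)^2>0$, the whole expression is a quantifier-free Boolean combination of polynomial (in)equalities that is satisfiable over $\mathbb R$ if and only if $G$ is a point visibility graph. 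This is precisely an instance of the existential theory of the reals, so the problem lies in $\exists\mathbb R$.

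The hardness direction is the main obstacle, and I would handle it by invoking the known result of Cardinal and Hoffmann that recognizing point visibility graphs is $\exists\mathbb R$-hard. Their reduction constructs, from an arbitrary instance of the existential theory of the reals, a graph whose point visibility realizations encode the solutions of that instance; the delicate part is forcing arbitrary polynomial constraints to be mirrored by collinearity and blocking patterns among the points, which is why building such a reduction from scratch is intricate. Combining this hardness with the membership argument above yields $\exists\mathbb R$-completeness of point visibility graph recognition, and Lemma~\ref{le:characterization-proper} then transfers the result verbatim to proper straight-line drawings with spanning ratio~$1$.
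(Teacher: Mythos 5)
Your proposal is correct and follows essentially the same route as the paper: both reduce the problem to point visibility graph recognition via Lemma~\ref{le:characterization-proper} and invoke the $\exists\mathbb R$-completeness of that recognition problem due to Cardinal and Hoffmann~\cite{ch-rcpvg-17}. The only difference is that you additionally spell out an explicit existential-theory-of-the-reals encoding for the membership direction, which the paper obtains for free by citing the completeness (rather than merely the hardness) of point visibility graph recognition; your encoding is sound, so this is just a harmless elaboration.
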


\begin{proof}
	The theorem follows by Lemma~\ref{le:characterization-proper} and from the fact that recognizing point visibility graphs is a problem that is $\exists\mathbb R$-complete~\cite{ch-rcpvg-17}.
\end{proof}

We conclude the section by presenting the following.

\begin{theorem} \label{th:complexity-planar}
	Recognizing whether a graph admits a planar straight-line drawing with spanning ratio equal to~$1$ is a linear-time solvable problem.
\end{theorem}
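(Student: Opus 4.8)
The plan is to reduce the theorem to a structural characterization of the graphs that admit a planar straight-line drawing with spanning ratio $1$, in the spirit of Lemmas~\ref{le:characterization-general} and~\ref{le:characterization-proper}, and then to observe that membership in the characterizing class can be tested in linear time. First I would record the basic drawing-level fact: in a planar straight-line drawing no edge may pass through a vertex and no two edges may cross, so a pair of vertices $u,v$ whose connecting segment $\overline{uv}$ contains no other vertex has spanning ratio~$1$ only if $uv$ is an edge, while a pair whose segment does contain other vertices $w_1,\dots,w_k$ (in this order) can be joined by a shortest path of length $\|uv\|$ only through the chain $u,w_1,\dots,w_k,v$, which forces each consecutive pair to be adjacent. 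Hence a planar straight-line drawing has spanning ratio~$1$ if and only if \emph{adjacency coincides with visibility}: two vertices are adjacent exactly when no third vertex lies on their connecting segment, and the resulting segments are pairwise crossing-free. As a first consequence, by the argument of Lemma~\ref{le:characterization-general}, any such graph has a Hamiltonian path, obtained by ordering the vertices by $x$-coordinate.

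Using this, I would prove that a graph admits a planar straight-line drawing with spanning ratio~$1$ if and only if it is one of the following: a \emph{path}; a \emph{fan}, i.e., a path together with one additional vertex adjacent to all path vertices; a \emph{double fan}, i.e., a path together with two nonadjacent vertices each adjacent to all path vertices; or $K_4$. The sufficiency direction is by explicit construction: a path is placed on a line; a fan is drawn as a collinear \emph{spine} with the extra vertex as an apex off the line, so that every spine pair is joined by the collinear sub-path of the spine and the apex reaches every spine vertex directly; a double fan is drawn with its two apices on opposite sides of the spine and positioned so that the segment between them passes through a single spine vertex, giving that pair a collinear path; and $K_4$ is drawn as a triangle with its fourth vertex at the centroid. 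In each case one checks immediately that the drawing is planar and that every pair of vertices is joined by a collinear path, so the spanning ratio equals~$1$.

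The necessity direction is the main obstacle. Starting from a realizing drawing, I would analyze the convex hull of the point set. Every non-adjacent pair must have its connecting segment blocked by an intermediate vertex, which forces a rich family of collinearities, and crossing-freeness then forbids two ``empty'' diagonals of a convex quadrilateral from both being present. I expect to show that the hull is necessarily a segment, a triangle, or a quadrilateral: if all points are collinear the graph is a path; a triangular hull yields either a fan (points interior to an edge are collinear on one side, the apex being a hull corner) or, when a single interior point is present, $K_4$; and a quadrilateral hull yields a double fan, with one diagonal carrying the collinear spine and the other diagonal blocked at a single spine vertex. The crux is to rule out hulls with five or more vertices and configurations with several interior points; here the key observation is that a vertex forced to lie simultaneously on two distinct lines through a common point must coincide with that point, a contradiction, combined with the fact that a convex polygon with at least five vertices has diagonals that cannot all be simultaneously blocked by vertices without creating a crossing or forcing an all-collinear configuration that destroys the required visibilities.

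Finally, given the characterization, the linear-time algorithm is immediate, and this proves the theorem. I would first dispose of all graphs on at most four vertices by brute force in constant time, which in particular handles $K_4$. Otherwise I would compute all vertex degrees in $O(n+m)$ time. A fan is detected by a vertex of degree $n-1$ (a universal vertex) whose removal leaves a path; a double fan is detected by a pair of nonadjacent vertices, each adjacent to all remaining $n-2$ vertices, whose removal leaves a path, and these candidates are located among the vertices of degree $n-1$ and $n-2$ and tested in linear time. If neither pattern applies, I would test directly whether $G$ is itself a path. Since testing that a given (sub)graph is a path---connected, acyclic, and of maximum degree at most~$2$---takes $O(n+m)$ time, accepting exactly the graphs that pass these checks recognizes the characterizing class, and hence the graphs admitting a planar straight-line drawing with spanning ratio~$1$, in linear time.
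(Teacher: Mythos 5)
Your opening reduction---that in a planar straight-line drawing spanning ratio $1$ forces adjacency to coincide with visibility along pairwise non-crossing segments---is correct and is exactly the observation the paper makes before invoking the characterization of Dujmovi\'c et al.~\cite{desw-dpgfss-07} of the graphs that admit a planar straight-line drawing with a segment between every two vertices. That known characterization consists of \emph{five} graph classes, and the paper's proof is essentially a citation to it plus the remark that membership is testable in linear time. You instead attempt to re-derive the characterization, and this is where the argument breaks: your list (path, fan, double fan, $K_4$) is incomplete, so the necessity direction cannot be completed as sketched and the recognition algorithm built on the list is incorrect.

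A concrete witness is $K_5$ minus an edge and, more generally, the join $P_m \vee K_2$ of a path with an edge, i.e., a path together with two \emph{adjacent} apices each adjacent to every path vertex. Place $p_1=(0,0)$, $p_2=(1,0)$, $p_3=(2,0)$, $a=(-1,1)$, $b=(-1,-1)$. Every segment between two of these points, except $\overline{p_1p_3}$, contains no third point, so all nine corresponding edges must be (and can be) present, and they are pairwise non-crossing: the edges from $a$ stay in the closed upper half-plane, those from $b$ in the lower one, and the edge $ab$ meets the spine's line at $(-1,0)$, outside the spine. The pair $p_1,p_3$ is served by the collinear path through $p_2$. Hence this is a planar straight-line drawing of $K_5-p_1p_3$ with spanning ratio $1$; since a path, a fan, and a double fan on five vertices have four, seven, and eight edges respectively, this nine-edge graph lies in none of your classes and your algorithm rejects it. The precise point where your case analysis fails is the claim that a triangular hull yields only a fan or $K_4$: here the hull is the triangle $ap_3b$ with the two interior points $p_1,p_2$. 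More broadly, the necessity direction is only announced (``I expect to show\dots'', ``the crux is\dots'') rather than proved, and since its intended conclusion is false as stated it cannot be repaired without first correcting the class list; the safe route is the paper's, namely quoting the five-class characterization of~\cite{desw-dpgfss-07} and checking membership in linear time.
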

\begin{figure}[htb]\tabcolsep=4pt
	\centering
	\includegraphics[width=.75\textwidth]{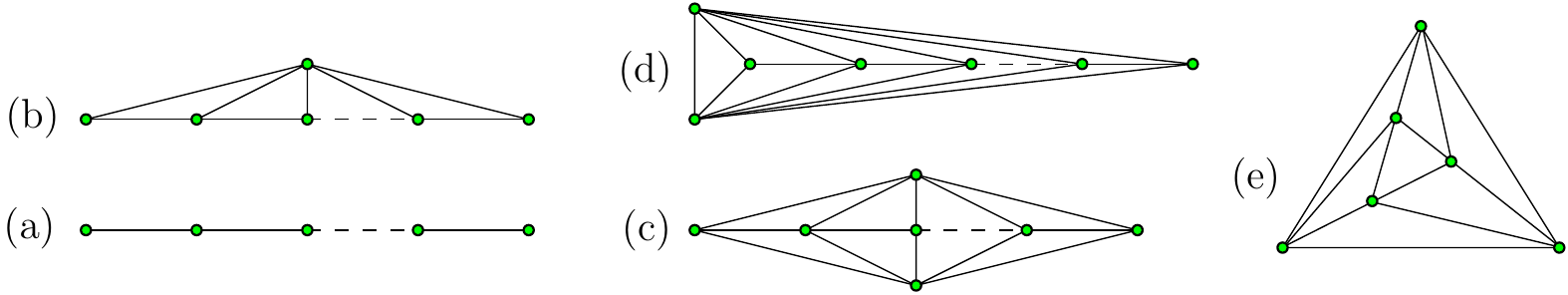}
	\caption{The five graph classes defined in~\cite{desw-dpgfss-07}.}\label{fig:planar-sp1}
\end{figure}
\begin{proof}
	Dujmovi\'c et al.~\cite{desw-dpgfss-07} characterized the graphs that admit a planar straight-line drawing with a straight-line segment between every two vertices as the graphs in the five graph classes in Figure~\ref{fig:planar-sp1}. Since a straight-line drawing has spanning ratio~$1$ if and only if every two vertices are connected by a straight-line segment, the theorem follows from the fact that recognizing whether a graph belongs to such five graph classes can be easily done in linear time.
\end{proof}


\section{Drawings with Spanning Ratio $\mathbf{1 + \epsilon}$} \label{se:planar}

In this section we study straight-line drawings with spanning ratio arbitrarily close to~$1$. Most of the section is devoted to a proof of the following result.

\begin{theorem} \label{th:planar-graph}
	For every $\epsilon>0$, every connected planar graph admits a planar straight-line drawing with spanning ratio smaller than $1+\epsilon$.
\end{theorem}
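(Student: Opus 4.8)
The plan is to build the drawing recursively inside a very thin horizontal box, using geometric scaling so that for every pair of vertices the Euclidean distance and the graph distance differ by less than a factor $1+\epsilon$. The guiding principle is a \emph{tiny-and-far} observation: if a subgraph $H$ is drawn inside a disk $D$ whose radius is negligible compared with the distance from $D$ to the rest of the drawing, and $H$ is attached to the rest of $G$ through a single vertex $r$, then any pair $u,v$ with $u\in H$ and $v\notin H$ can be connected by routing from $u$ to $r$ and then continuing to $v$. The detour from $u$ to $r$ has length $O(\mathrm{radius}(D))$, which is negligible relative to $\|uv\|$, so it contributes only a lower-order term to the spanning ratio, while pairs inside $H$ are handled by the recursion at a smaller scale where the same bound applies. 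This mechanism is exactly what forces edge lengths to shrink by a constant factor per level and hence to be spread exponentially, matching the announced lower bound for stars.

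To make this precise I would prove, by induction on the number of vertices, a statement stronger than the theorem: for every connected planar graph $G$, every vertex $r$ incident to the outer face of a fixed planar embedding, and every sufficiently thin box $B$, there is a planar straight-line drawing of $G$ inside $B$ with $r$ on the left side of $B$ such that every pair of vertices is joined by a path of length less than $(1+\epsilon)\,\|uv\|$, and in addition every vertex is joined to $r$ by a path that is nearly straight and nearly horizontal. The extra clause about paths to $r$ is what the recursion consumes. When $r$ is a cut vertex, $G$ is the union of its $\{r\}$-bridges $B_1,\dots,B_k$, and I would place their recursively produced drawings along a horizontal ray emanating from the shared image of $r$, at geometrically increasing distances, each inside a box whose size is a tiny fraction of its distance to $r$. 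A pair $u\in B_i$, $v\in B_j$ then routes through $r$, and since $B_i$ and $B_j$ are tiny compared with their distances $a_i<a_j$ to $r$, the ratio is about $(a_i+a_j)/(a_j-a_i)$; this is the same elementary estimate as for the star, and it stays below $1+\epsilon$ precisely when consecutive distances grow by a factor at least $(2+\epsilon)/\epsilon$. Tiny angular perturbations off the ray restore planarity without affecting the estimates.

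The remaining, and hardest, case is when $G$ is \emph{$2$-connected}, since then no cut vertex is available to split it. Here I would process the vertices in a canonical (or $st$-) ordering $v_1,\dots,v_n$ compatible with the planar embedding, with $r=v_1$, and insert them one at a time, placing $v_{i+1}$ to the right of $v_1,\dots,v_i$ and essentially collinear with them, at a horizontal distance chosen so large relative to the current width that every edge from $v_{i+1}$ to an already-placed neighbor is nearly horizontal. The defining property of the ordering, that the earlier neighbors of $v_{i+1}$ form a contiguous arc of the current outer boundary, is what lets the insertion be carried out without crossings and guarantees that each newly created pair is joined by a nearly straight, nearly horizontal path; together with the inductive control on the earlier pairs this yields the spanning-ratio bound and supplies the near-horizontal path from every vertex back to $r$.

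The main obstacle, to which the bulk of the argument must be devoted, is bounding the accumulated error \emph{uniformly}: all $\binom{n}{2}$ pairs must stay below $1+\epsilon$ simultaneously, while planarity is preserved across all $O(n)$ nested scales, and every vertex retains a near-horizontal path to its designated vertex as the recursion composes bridges and the incremental insertions proceed. Consequently the per-level growth factor, the magnitude of the angular perturbations, and the aspect ratio of the boxes all have to be fixed as explicit functions of $\epsilon$ and then shown to propagate consistently through both the cut-vertex recursion and the $2$-connected insertion step, so that the lower-order detour terms telescope to something strictly smaller than $\epsilon$.
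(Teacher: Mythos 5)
Your geometric mechanism is exactly the paper's: place each new piece so that its diameter is negligible compared with its distance to everything else, so that routing through a single attachment edge costs only a lower-order detour. The paper's Lemma~\ref{le:canonical} is precisely your incremental far-placement, with the explicit choice $d>k\delta/\epsilon$. Where you diverge is in the combinatorics, and that is where your plan has genuine gaps.

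First, your case split is not exhaustive and the cut-vertex branch does not compose. A connected planar graph in which the designated root $r$ is not a cut vertex need not be $2$-connected, so you are forced into a full block--cut-tree recursion; but then a block $B$ may contain cut vertices of $G$ that are \emph{internal} to every drawing of $B$ your insertion procedure produces (e.g., pendant edges attached to several distinct internal vertices of a triangulation). The pieces hanging off such a cut vertex cannot be placed ``far'' at all --- they are confined to a bounded face --- so the tiny-and-far mechanism must be replaced by a tiny-and-near one inside a face, and pairs lying in pieces attached at \emph{different} cut vertices are no longer controlled by the single estimate $(a_i+a_j)/(a_j-a_i)$ you give for the star. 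This branching recursion with interacting scales is not resolved by your sketch; the paper avoids it entirely by producing one global linear insertion order.

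Second, and more seriously, the $2$-connected case rests on a property your orderings do not have. For an $st$-ordering of a general $2$-connected planar graph the prefix $G_i$ need not be $2$-connected, its outer boundary is a closed walk rather than an $x$-monotone path, and nothing guarantees that the neighbors of $v_{i+1}$ are simultaneously visible from a point far above the current drawing; so ``the insertion can be carried out without crossings'' is unproven. Canonical orderings do give the needed invariant, but only after triangulating --- and triangulating is exactly what threatens your spanning-ratio argument, because shortest paths may use only edges of the \emph{original} graph, so you must ensure that each prefix of the original graph remains connected (otherwise the newly placed far vertex may have no original edge back to the cluster, or the cluster itself may be disconnected in the original graph, in which case the claimed short path does not exist). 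Reconciling triangulation with prefix-connectivity of the input graph is the actual technical content of the paper's proof (Lemma~\ref{le:augmentation}), and it is the one idea missing from your plan; once you have it, the block decomposition becomes unnecessary and the whole construction collapses to a single sequence of far placements.
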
 	

Let $G$ be an $n$-vertex maximal planar graph with $n\geq 3$, let $\mathcal G$ be a planar drawing of $G$, and let $(u,v,z)$ be the cycle delimiting the outer face of $G$ in $\mathcal G$. A \emph{canonical ordering}~\cite{bbc-mco-11,dpp-hdpgg-90,k-dpgco-96}) for $G$ is a total ordering $\sigma_G = [v_1,v_2,\dots,v_n]$ of its vertex set such that the following hold for $k=3,\dots,n$: (i) $v_1=u$, $v_2=v$, and $v_n=z$; (ii) the subgraph $G_k$ of $G$ induced by $v_1,v_2,\dots,v_k$ is $2$-connected and the cycle $\mathcal C_k$ delimiting its outer face in $\mathcal G$ consists of the edge $v_1v_2$ and of a path $\mathcal P_k$ between $v_1$ and $v_2$; and (iii) $v_{k}$ is incident to the outer face of $G_k$ in $\mathcal G$. Theorem~\ref{th:planar-graph} is implied by the following two lemmata.   

\begin{lemma} \label{le:augmentation}
	Let $H$ be any $n$-vertex connected planar graph. There exist an $n$-vertex maximal planar graph $G$ and a canonical ordering $\sigma_G = [v_1,v_2,\dots,v_n]$ for $G$ such that, for each $k\in \{1,2,\dots,n\}$, the subgraph $H_k$ of $H$ induced by $[v_1,v_2,\dots,v_k]$ is connected.
\end{lemma}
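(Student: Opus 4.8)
The plan is to construct the maximal planar graph $G$ and the canonical ordering $\sigma_G$ \emph{simultaneously}, rather than fixing an augmentation first and searching for a compatible ordering afterwards. First I would fix a planar embedding of $H$ (the cases $n\le 2$ being trivial) and recall that any connected plane graph can be augmented, by adding edges, to a maximal planar $G$ on the same vertex set; thus $H$ is a spanning subgraph of $G$, and the labels $v_1,\dots,v_n$ produced for $G$ are exactly the labels used to define the induced subgraphs $H_k$. The task then reduces to ordering $V(G)$ so that the order is a canonical ordering of $G$ while, in addition, every prefix $\{v_1,\dots,v_k\}$ induces a connected subgraph of $H$. To keep enough freedom, I would interleave the two constructions: I build a chain of internally triangulated disks $G_3\subseteq G_4\subseteq\dots\subseteq G_n=G$, where $G_k$ has outer boundary the cycle $\mathcal C_k=\mathcal P_k+v_1v_2$, and I commit to triangulation edges only as each new vertex is introduced.

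Concretely, I would maintain two invariants at step $k$: (I1) $H_k:=H[\{v_1,\dots,v_k\}]$ is connected and is a subgraph of $G_k$; and (I2) every edge of $H$ joining $\{v_1,\dots,v_k\}$ to an as-yet-unplaced vertex has its placed endpoint on $\mathcal C_k$, so that no future edge of $H$ is ever buried in the interior of a disk. The base case sets $v_1,v_2,v_3$ to a triangle inducing a connected subgraph of $H$, with $v_1v_2\in E(H)$ chosen on the outer face so that $H_2$ is connected; keeping such an edge alive guarantees the smallest prefixes are connected. In the generic step I choose the next vertex $v_{k+1}$ whose $H$-neighbours among $\{v_1,\dots,v_k\}$ are nonempty (forcing $H_{k+1}$ connected) and occupy a contiguous arc of $\mathcal P_k$; I then place $v_{k+1}$ outside $\mathcal C_k$, join it to the run spanned by that arc, and triangulate the new outer pocket. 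Reading this forward construction backwards, it is exactly a canonical ordering: deleting $v_{k+1}$ from $G_{k+1}$ returns $G_k$, and $v_{k+1}$ is a vertex in the interior of $\mathcal P_{k+1}$ incident to no chord of $\mathcal C_{k+1}$, which is precisely the condition for a valid canonical-ordering step.

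The heart of the argument, and the step I expect to be the main obstacle, is proving that a \emph{valid} next vertex exists at every step — equivalently, in the reverse (deletion) view, that the current disk always contains a vertex on $\mathcal P_k$ that is both chord-free (hence canonically removable) and non-separating in $H_k$ (hence $H$-safe, so that $H_{k-1}$ stays connected). Each requirement is easy in isolation: the standard canonical-ordering fact supplies a chord-free interior vertex of the outer path, and any connected graph has non-cut vertices, e.g.\ the leaves of a spanning tree of $H_k$. The difficulty is that for a carelessly chosen triangulation the chord-free outer vertices could all be cut vertices of $H_k$. I would overcome this by letting $H$ dictate the triangulation: maintaining a spanning tree of the remaining part of $H$ and triangulating the outer pocket so that a non-cut vertex of $H_k$ sitting on $\mathcal C_k$ is exposed as chord-free — the freedom in filling a face with diagonals is exactly what lets one refrain from creating a chord at a prescribed boundary vertex. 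Making this simultaneous choice rigorous (guaranteeing that such a vertex is always available on the boundary, not merely somewhere in $H_k$) is the crux; it is where planarity of $H$ and the block structure of $H_k$ must be used together.

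Finally I would verify the bookkeeping that keeps the two invariants compatible throughout. Because $v_{k+1}$ is joined in $G$ only to the vertices of its run, all of which lie on $\mathcal C_{k+1}$, its $H$-neighbours are automatically among boundary vertices, so no $H$-edge is buried and (I2) is preserved; and because we never bury a vertex that still has an unplaced $H$-neighbour, $H_k$ remains an induced subgraph of the disk $G_k$, so (I1) is preserved. Once the crux is established, iterating the step from $G_3$ up to $G_n=G$ yields the maximal planar graph $G$, its canonical ordering $[v_1,\dots,v_n]$, and the connectivity of every induced prefix $H_k$, which is exactly the statement of the lemma.
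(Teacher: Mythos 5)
Your high-level architecture matches the paper's: both build $G$ and the ordering simultaneously and incrementally, maintaining that each prefix induces a connected subgraph of $H$ and that all not-yet-placed vertices and edges of $H$ stay in the outer face of the growing disk $G_k$ (your invariants (I1)--(I2) are essentially the paper's Properties~1--4). However, the step you yourself flag as ``the crux'' --- proving that a valid next vertex always exists --- is exactly the content of the lemma, and you leave it unresolved. Saying that ``the freedom in filling a face with diagonals is exactly what lets one refrain from creating a chord at a prescribed boundary vertex'' is a hope, not an argument: you never exhibit a vertex that is simultaneously $H$-adjacent to the current prefix, attachable along the outer path without trapping any unplaced part of $H$ inside the new disk, and compatible with completing the triangulation. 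Note also that working in the reverse (deletion) view, where you need a boundary vertex that is both chord-free in $G_k$ and a non-cut vertex of $H_k$, makes the problem look harder than it is; the forward view only needs the new vertex to have at least one $H$-edge into the prefix, which is much weaker than being non-separating.

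The paper closes this gap with a concrete selection rule that your proposal is missing. A vertex $v$ outside $G_{k-1}$ is a \emph{candidate} if some edge $w_iv$ of $H$ occupies a rotationally extremal position around a vertex $w_i$ of the outer path $\mathcal P_{k-1}$ (immediately following $w_iw_{i-1}$ clockwise or $w_iw_{i+1}$ counter-clockwise); each candidate with more than one attachment defines a \emph{reference cycle}, and a minimality argument on the number of candidates nested inside these cycles yields a candidate of depth~$0$. That depth-$0$ property is what guarantees the pocket between its extremal attachments contains no other candidate, so that after relocating the $\{v\}$-bridges hanging inside the reference cycle to the outer face, one can add all edges from $v$ to the spanned run (including edges not in $H$) and triangulate without burying anything. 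Two further points your sketch glosses over and the paper must handle explicitly: the $H$-neighbours of the chosen vertex need \emph{not} form a contiguous arc of $\mathcal P_k$ (the missing edges are supplied by $G$, not $H$), so your stated selection criterion is too restrictive; and the chosen vertex may have entire subgraphs of $H$ attached to it inside the pocket, which must be redrawn outside before the pocket is sealed --- your invariant (I2) asserts this never happens but your construction does not enforce it. Without the candidate/depth argument (or an equivalent existence proof), the proposal is an accurate description of what must be shown rather than a proof of it.
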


\begin{proof}
	For $k=2,3,\dots,n$, let $G_k$ be the subgraph of $G$ induced by $v_1,v_2,\dots,v_k$ and let $L_k$ be the graph composed of $G_k$ and of the vertices and edges of $H$ that are not in $G_k$.
	
	For each $k=2,3,\dots,n$, we define $v_1,v_2,\dots,v_k$ and $G_k$ so that $H_k$ is connected, $G_k$ is $2$-connected, and $L_k$ admits a planar drawing $\mathcal L_k$ such that:
	\begin{enumerate}
		\item the outer face of the planar drawing $\mathcal G_k$ of $G_k$ in $\mathcal L_k$ is delimited by a cycle $\mathcal C_k$ composed of the edge $v_1v_2$ and of a path $\mathcal P_k$ between $v_1$ and $v_2$; 
		\item $v_k$ is incident to the outer face of $\mathcal G_k$; 
		\item every internal face of $\mathcal G_k$ is delimited by a $3$-cycle; and
		\item the vertices and edges of $H$ that are not in $G_k$ lie in the outer face of $\mathcal G_k$. 
	\end{enumerate}
	
	If $k=2$, then construct any planar drawing $\mathcal L_2$ of $H$ and define $v_1$ and $v_2$ as the end-vertices of any edge $v_1v_2$ incident to the outer face of $\mathcal L_2$. Properties~1--4 are then trivially satisfied (in this case the path $\mathcal P_2$ is the single edge $v_1v_2$).
	
	If $2<k<n$, assume that $v_1,v_2,\dots,v_{k-1}$ and $G_{k-1}$ have been defined so that $H_{k-1}$ is connected, $G_{k-1}$ is $2$-connected, and $L_{k-1}$ admits a planar drawing $\mathcal L_{k-1}$ such that Properties~1--4 above are satisfied. Let $\mathcal P_{k-1}=(u=w_1,w_2,\dots,w_x=v)$, where $x\geq 2$. 
	
	Consider any vertex $v$ that is in $L_{k-1}$ and that is not in $G_{k-1}$. By Properties~1 and~4 of $\mathcal L_{k-1}$, all the neighbors of $v$ in $G_{k-1}$ lie in $\mathcal P_{k-1}$. We say that $v$ is a \emph{candidate} (to be designated as $v_k$) \emph{vertex} if, for some $1\leq i\leq x$, there exists an edge $w_i v$ such that $w_i v$ immediately follows the edge $w_i w_{i-1}$ in clockwise order around $w_i$ or immediately follows the edge $w_iw_{i+1}$ in counter-clockwise order around $w_i$; see Figure~\ref{fig:candidate}.
	
	\begin{figure}[htb]\tabcolsep=4pt
		\centering
		\includegraphics[scale=1.4]{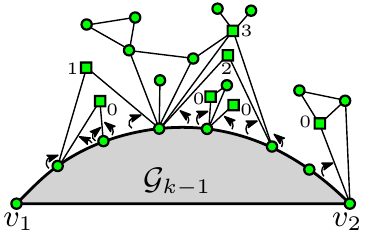}
		\caption{The drawing $\mathcal L_{k-1}$ of $L_{k-1}$, where the interior of $\mathcal G_{k-1}$ is colored gray. Each candidate vertex is represented by a square and labeled with its depth.}
		\label{fig:candidate}
	\end{figure}
	
	For each candidate vertex $v$, let $w_{a(v)}$ and $w_{b(v)}$ be the neighbors of $v$ in $\mathcal P_{k-1}$ such that $a(v)$ is minimum and $b(v)$ is maximum (possibly $a(v)=b(v)$). If $a(v)<b(v)$, define the \emph{reference cycle $\mathcal C(v)$ of} $v$ as the cycle composed of the edges $w_{a(v)} v$ and $w_{b(v)} v$ and of the subpath of $\mathcal P_{k-1}$ between $w_{a(v)}$ and $w_{b(v)}$. Define the \emph{depth} $d(v)$ of $v$ as $0$ if $a(v)=b(v)$ or as the number of candidate vertices that lie inside $\mathcal C(v)$ in $\mathcal L_{k-1}$ otherwise. 
	
	We claim that there exists a candidate vertex with depth $0$. Consider a candidate vertex~$v$ with minimum depth and assume, for a contradiction, that $d(v)>0$; then there exists a candidate vertex $u$ that lies inside $\mathcal C(v)$ in $\mathcal L_{k-1}$. By the planarity of $\mathcal L_{k-1}$, the candidate vertices that lie inside $\mathcal C(u)$ form a subset of those that lie inside $\mathcal C(v)$; moreover, there is at least one candidate vertex, namely $u$, that lies inside $\mathcal C(v)$ and not inside $\mathcal C(u)$, hence $d(u)<d(v)$. This contradicts the assumption that $v$ has \mbox{minimum depth and proves the claim.}
	
	Consider a candidate vertex $v$ with $d(v)=0$. We let $v_k:=v$ and distinguish two cases.
	
	\begin{figure}[htb]
		\centering
		\begin{minipage}[b]{.4\linewidth}
			\centering\includegraphics[scale=1.4]{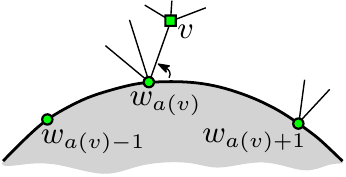}
		\end{minipage}
		\begin{minipage}[b]{.4\linewidth}
			\centering\includegraphics[scale=1.4]{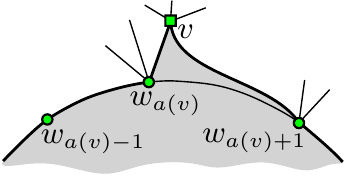}
		\end{minipage}
		\caption{(a) A candidate vertex $v$ with $d(v)=0$ and $a(v)=b(v)$. (b) The drawing $\mathcal L_k$ of $L_k$ obtained by drawing the edge $w_{a(v)+1}v$ in $\mathcal L_{k-1}$.}\label{fig:av=bv}
	\end{figure}

	If $a(v)=b(v)$, assume that $w_{a(v)} v$ immediately follows the edge $w_{a(v)}w_{a(v)+1}$ in counter-clockwise order around $w_{a(v)}$, the other case is symmetric; refer to Figure~\ref{fig:av=bv}. Define $G_k$ as $G_{k-1}$ plus the vertex $v$ and the edges $w_{a(v)}v$ and $w_{a(v)+1}v$. Then $H_k$ is connected because $H_{k-1}$ is connected and the edge $w_{a(v)}v$ belongs to $H$. Further, $G_k$ is $2$-connected because $G_{k-1}$ is $2$-connected and $v$ is adjacent to two distinct vertices of $G_{k-1}$. Define $\mathcal L_k$ by drawing the edge $w_{a(v)+1}v$ so that the cycle $(w_{a(v)},w_{a(v)+1},v)$ does not contain any vertex or edge in its interior. Property~1 is satisfied by $\mathcal L_k$ with $\mathcal P_{k}=(u=w_1,w_2,\dots,w_{a(v)},v,w_{a(v)+1},\dots,w_x=v)$; note that $v$ has no neighbor in $G_{k}$ other than $w_{a(v)}$ and $w_{a(v)+1}$, since $a(v)=b(v)$. Property~2 is satisfied by $\mathcal L_k$ since $\mathcal L_{k-1}$ satisfies Property~4 and by construction. Since the cycle $(w_{a(v)},w_{a(v)+1},v)$ does not contain any vertex in its interior and since $\mathcal L_{k-1}$ satisfies Properties~3 and~4, it follows that $\mathcal L_{k}$ also satisfies Properties~3 and~4. 
	
	\begin{figure}[htb]
		\centering
		\begin{minipage}[b]{.4\linewidth}
			\centering\includegraphics[scale=1.4]{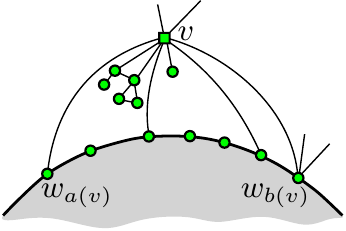}
		\end{minipage}
		\begin{minipage}[b]{.4\linewidth}
			\centering\includegraphics[scale=1.4]{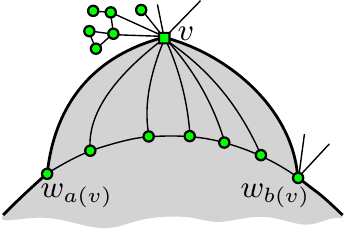}
		\end{minipage}
		\caption{(a) A candidate vertex $v$ with $d(v)=0$ and $a(v)<b(v)$. (b) The drawing $\mathcal L_k$ of $L_k$ obtained by moving out of $\mathcal C(v)$ each $\{v\}$-bridge of $L_{k-1}$ whose vertices different from $v$ lie inside $\mathcal C(v)$ and by drawing the edges among $w_{a(v)}v, w_{a(v)+1}v, \dots, w_{b(v)+1}v$ not in $H$ planarly inside $\mathcal C(v)$.}
		\label{fig:av<bv}
	\end{figure}
	
	Next, we consider the case in which $a(v)<b(v)$; refer to Figure~\ref{fig:av<bv}. We claim that the only edges incident to vertices in the path $(w_{a(v)}, w_{a(v)+1}, \dots, w_{b(v)})$ and lying inside $\mathcal C(v)$ in $\mathcal L_{k-1}$ are those connecting such vertices to $v$. Suppose, for a contradiction, that an edge $w_iu$ with $u\neq v$ lies inside $\mathcal C(v)$. If $a(v)<i<b(v)$, then there exists an edge $w_iz$ with $z\neq v$ that immediately follows $w_iw_{i-1}$ in clockwise order around $w_{i}$ or that immediately follows $w_iw_{i+1}$ in counter-clockwise order around $w_{i}$; hence, $z$ is a candidate vertex. Further, by the planarity of $\mathcal L_{k-1}$, we have that $w_iz$ lies inside $\mathcal C(v)$, except at $w_i$, however this contradicts $d(v)=0$. The proof for the cases in which $i=a(v)$ or $i=b(v)$ is analogous. 
	
	It follows from the previous claim that $v$ is the only vertex of $\mathcal C(v)$ which might have incident edges that lie inside $\mathcal C(v)$ in $\mathcal L_{k-1}$ and that have one end-vertex not in $\mathcal C(v)$. We redraw each $\{v\}$-bridge of $L_{k-1}$ whose vertices different from $v$ lie inside $\mathcal C(v)$ planarly so that it now lies outside $\mathcal C(v)$; after this modification, no vertex of $L_{k-1}$ lies inside $\mathcal C(v)$. 
	
	Define $G_k$ as $G_{k-1}$ plus the vertex $v_k:=v$ and the edges $w_{a(v)}v, w_{a(v)+1}v, \dots, w_{b(v)}v$. Then $H_k$ is connected, because $H_{k-1}$ is connected and the edge $w_{a(v)}v$ belongs to $H$. Further, $G_k$ is $2$-connected, because  $G_{k-1}$ is $2$-connected and $v$ is adjacent to at least two distinct vertices of $G_{k-1}$. Define $\mathcal L_k$ by drawing the edges among $w_{a(v)}v, w_{a(v)+1}v, \dots, w_{b(v)}v$ that do not belong to $H$ so that they all lie inside $\mathcal C(v)$, except at their end-vertices, and so that the edges $w_{a(v)}v, w_{a(v)+1}v, \dots, w_{b(v)}v$ appear consecutively and in this counter-clockwise order around $v$. Property~1 is satisfied by $\mathcal L_k$ with $\mathcal P_{k}=(u=w_1,w_2,\dots,w_{a(v)},v,w_{b(v)},w_{b(v)+1},\dots,w_x=v)$. Property~2 is satisfied by $\mathcal L_k$ by construction and since $\mathcal L_{k-1}$ satisfies Property~4. Every internal face of $\mathcal L_{k}$ that is not an internal face of $\mathcal L_{k-1}$ is delimited by a $3$-cycle $(w_i, w_{i+1},v)$, for some $a(v)\leq i <b(v)$; hence $\mathcal L_{k}$ satisfies Property~3 since $\mathcal L_{k-1}$ does. Finally, $\mathcal L_{k}$ satisfies Property~4 since every vertex or edge of $H$ that is not in $G_k$ lies outside $\mathcal G_{k-1}$ since $\mathcal L_{k-1}$ satisfies Property~4 and lies outside $\mathcal C(v)$ by construction.
	
	If $k=n$, the construction slightly differs from the one described for the case $2<k<n$, as we also require that the outer face of $\mathcal G_n$ is delimited by the $3$-cycle $(v_1,v_2,v_n)$. Hence, if $a(v)=b(v)$ (resp.\ if $a(v)<b(v)$), then $G_n$ also contains the edges $w_1v$, $w_2v$, $\dots$, $w_{a(v)-1}v$, $w_{a(v)+2}v$, $w_{a(v)+3}v$, $\dots$, $w_xv$ (resp.\ the edges $w_1v$, $w_2v$, $\dots$, $w_{a(v)-1}v$, $w_{b(v)+1}v$, $w_{b(v)+2}v$, $\dots$, $w_xv$); further, the edges $w_1v, w_2v, \dots, w_xv$ are drawn in $\mathcal L_n$ in such a way that they appear in this counter-clockwise order around $v$, and so that the outer face of $\mathcal L_n$ is delimited by the $3$-cycle $(w_1=v_1,w_x=v_2,v=v_n)$. The proof that $\mathcal L_n$ satisfies Properties~1--4 is similar, and in fact simpler, than the one described above. 
	
	The above construction implies the statement of the lemma. Namely, $H_k$ is connected for $k=3,4,\dots,n$. Further, $G$ is a maximal planar graph by Property~3 and by the additional requirement for the case $k=n$. We now prove that $[v_1,v_2,\dots,v_n]$ is a canonical ordering for $G$. By Properties~1 and~2 of $\mathcal L_n$, we have that $v_1$, $v_2$, and $v_n$ are incident to the outer face of $\mathcal L_n$; further, for $k=3,4,\dots,n$, we have that $G_k$ is $2$-connected and its outer face in $\mathcal G_k$ is delimited by the edge $v_1v_2$ and by a path $\mathcal P_k$ between $v_1$ and $v_2$, by Property~1 of $\mathcal L_k$; finally, $v_k$ is incident to the outer face of $\mathcal G_k$ for $k=3,4,\dots,n$, by Property~2 of $\mathcal L_k$.   
\end{proof}

\begin{lemma} \label{le:canonical}
	For every $k=3,\dots,n$ and for every $\epsilon>0$, there exists a planar straight-line drawing $\Gamma_k$ of $G_k$ such that: 
	\begin{enumerate}
		\item \label{le:horizontal} the outer face of $\Gamma_k$ is delimited by the cycle $\mathcal C_k$; further, the path $\mathcal P_k$ is $x$-monotone and lies above the edge $uv$, except at $u$ and $v$; and
		\item \label{le:restriction} the restriction $\Xi_k$ of $\Gamma_k$ to the vertices and edges of $H_k$ is a drawing with spanning ratio smaller than $1+\epsilon$.
	\end{enumerate} 
\end{lemma}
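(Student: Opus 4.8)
The plan is to construct the drawings $\Gamma_k$ inductively, following the canonical ordering, and to place each newly added vertex $v_k$ far enough above the current drawing that the new path $\mathcal P_k$ is $x$-monotone while the spanning ratio of $\Xi_k$ stays below $1+\epsilon$. The base case $k=3$ is immediate: $G_3$ is a triangle, which we draw so that the bottom edge is $uv$ and the apex $v_3$ lies above it, satisfying Property~\ref{le:horizontal}; since $H_3$ has at most three vertices and is connected, every pair of its vertices is joined by an edge or a two-edge path whose geometry we can arrange to meet the ratio bound.

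For the inductive step I would assume $\Gamma_{k-1}$ has been built satisfying both properties, and add $v_k$. By the definition of the canonical ordering, $v_k$ is adjacent in $G_k$ to a consecutive subpath $(w_{a},w_{a+1},\dots,w_{b})$ of $\mathcal P_{k-1}$, and these are the leftmost and rightmost neighbors on the outer path. First I would place $v_k$ at an $x$-coordinate strictly between $w_a$ and $w_b$ (or just to one side when $a=b$), and at a large height $Y_k$ to be chosen. Drawing the edges $v_k w_a,\dots,v_k w_b$ and replacing the subpath between $w_a$ and $w_b$ on the outer boundary by $(w_a,\dots,$ up to $v_k$ and back down $\dots,w_b)$ keeps the new path $\mathcal P_k$ $x$-monotone provided $w_a.x < v_k.x < w_b.x$ and $Y_k$ exceeds the current height; this secures Property~\ref{le:horizontal}. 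The delicate point is Property~\ref{le:restriction}: only the edges of $G_k$ that already belong to $H$ count toward paths in $\Xi_k$, so I must verify the spanning ratio using the genuine $H$-edges incident to $v_k$, at least the edge $w_{a}v_k$ guaranteed by Lemma~\ref{le:augmentation}.

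The key quantitative idea is that by taking $Y_k$ very large, any $H$-edge incident to $v_k$ becomes nearly vertical and extremely long, so for any vertex $p$ of $H_{k-1}$ the shortest $H$-path from $v_k$ to $p$ can route down one such long almost-vertical edge into $H_{k-1}$ and then use the inductively good spanner structure of $\Xi_{k-1}$. Because $\|v_k p\|$ is itself dominated by the vertical drop $Y_k$, the ratio $\pi(v_k,p)/\|v_k p\|$ tends to the ratio contributed inside $\Xi_{k-1}$ plus a term approaching $1$ for the long edge; choosing $Y_k$ large and the horizontal offset of $v_k$ small makes this ratio smaller than $1+\epsilon$. For pairs $p,q$ both in $H_{k-1}$ nothing changes, since $\Xi_{k-1}$ is a subdrawing of $\Xi_k$ and already has ratio below $1+\epsilon$. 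I would therefore pick the heights $Y_3 < Y_4 < \dots < Y_n$ growing fast enough (geometrically, say) that at every step the detour penalty incurred by the new long edge is absorbed within the $\epsilon$ budget.

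The main obstacle I expect is making the estimate for pairs $(v_k,p)$ uniform: the spanning ratio must stay below the \emph{same} $1+\epsilon$ for all $k$ simultaneously, not merely below some bound that degrades as vertices accumulate. This requires choosing each $Y_k$ after fixing $\Gamma_{k-1}$ and controlling how much the long edge from $v_k$ can stretch distances to deep vertices of $H_{k-1}$; the worst case is a vertex $p$ almost directly below $v_k$, where the Euclidean distance $\|v_k p\|$ is close to the full drop and the cheapest $H$-route may be forced through an edge $w_a v_k$ that is horizontally displaced from $p$. Handling this cleanly — arguing that the horizontal displacement is negligible against $Y_k$ and that the interior path length in $\Xi_{k-1}$ is bounded by a constant multiple of $\|v_k p\|$ — is where the real work lies, and it is exactly the step that forces the exponential growth of the $Y_k$ and hence the exponential edge-length ratio discussed later in the paper.
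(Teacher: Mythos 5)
Your proposal follows essentially the same route as the paper: an induction along the canonical ordering in which $v_k$ is placed between its leftmost and rightmost neighbours on $\mathcal P_{k-1}$ and pushed very far from the current drawing, with the distance chosen only after $\Gamma_{k-1}$ is fixed. The one step you flag as ``the real work'' is closed in the paper more simply than you anticipate: instead of bounding the interior path length by a constant multiple of $\|v_kp\|$ or invoking the spanner quality of $\Xi_{k-1}$ (which plays no role for pairs involving $v_k$), one encloses $\Gamma_{k-1}$ in a disk of diameter $\delta$, notes that any path in $H_{k-1}$ has length at most $(k-2)\delta$ and that the $H$-edge leaving $v_k$ has length at most $d+\delta$, and requires $d>k\delta/\epsilon$; since $\|v_kv_j\|\ge d$, the ratio is at most $(d+k\delta)/d<1+\epsilon$ uniformly over all $v_j$, regardless of where $v_j$ sits relative to $v_k$.
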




\begin{proof}
	The proof is by induction on $k$. If $k=3$, then a planar straight-line drawing $\Gamma_3$ of $G_3$ is constructed by drawing the $3$-cycle $v_1v_2v_3$ as an isosceles triangle in which $v_1v_2$ is horizontal and has length $\epsilon/2$, while $v_1v_3$ and $v_2v_3$ have length $1$, with $v_3$ above the edge $v_1v_2$. By Lemma \ref{le:augmentation}, the graphs $H_2$ and $H_3$ are connected, hence the edge $v_1v_2$ belongs to them and at least one of the edges $v_1v_3$ and $v_2v_3$ belongs to $H_3$. Hence, we have $\frac{\pi(v_1,v_2)}{\|v_1v_2\|}=\frac{\|v_1v_2\|}{\|v_1v_2\|}=1<1+\epsilon$. Further, $\frac{\pi(v_1,v_3)}{\|v_1v_3\|}\leq \max\{\|v_1v_3\|,\|v_1v_2\| + \|v_2v_3\|\}=1+\epsilon/2<1+\epsilon$. Analogously, $\frac{\pi(v_2,v_3)}{\|v_2v_3\|}<1+\epsilon$.
	
	\begin{figure}[htb]
		\tabcolsep=4pt
		\centering
		\includegraphics[scale=1.2]{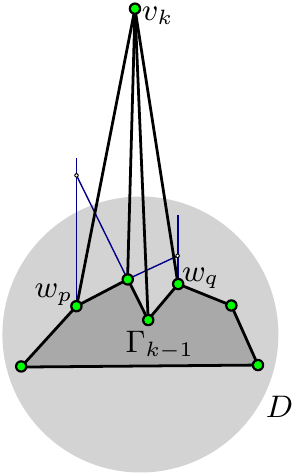}
		\caption{Construction of $\Gamma_k$ from $\Gamma_{k-1}$.}\label{fig:canonical-construction}
	\end{figure}
	
	Now assume that, for some $k=4,\dots,n$, a planar straight-line drawing $\Gamma_{k-1}$ of $G_{k-1}$ has been constructed satisfying Properties~\ref{le:horizontal} and \ref{le:restriction}; refer to Figure~\ref{fig:canonical-construction}. Let $\delta$ be the diameter of a disk $D$ containing $\Gamma_{k-1}$ in its interior. We construct $\Gamma_k$ from $\Gamma_{k-1}$ by placing $v_k$ in the plane as follows. Let $\mathcal P_{k-1}=(u=w_1,w_2,\dots,w_x=v)$. As proved in~\cite{dpp-hdpgg-90}, the neighbors of~$v_k$ in $G_{k-1}$ are the vertices in a sub-path $(w_p,w_{p+1},\dots,w_q)$ of $\mathcal P_{k-1}$, where $1\leq p<q\leq x$. By Property~\ref{le:horizontal} of $\Gamma_{k-1}$, we have $x(w_p)<x(w_{p+1})<\dots<x(w_q)$. We then place $v_k$ at any point in the plane such that the following conditions are satisfied: (i) $x(w_p)<x(v_k)<x(w_q)$; (ii) for every $i=p,\dots,q-1$, the $y$-coordinate of $v_k$ is larger than the $y$-coordinates of the intersection points between the line through $w_iw_{i+1}$ and the vertical lines through $w_p$ and $w_q$; and (iii) the distance between $v_k$ and the point of $D$ closest to $v_k$ is a real value $d>\frac{k \delta}{\epsilon}$.
	
	Since $\mathcal P_{k}$ is obtained from $\mathcal P_{k-1}$ by substituting the path $(w_p,w_{p+1},\dots,w_q)$ with the path $(w_p,v_k,w_q)$, Condition~(i) and the $x$-monotonicity of $\mathcal P_{k-1}$ imply that $\mathcal P_{k}$ is $x$-monotone. Since $\Gamma_{k-1}$ is planar, in order to prove the planarity of $\Gamma_k$ it suffices to prove that no edge incident to $v_k$ intersects any distinct edge of $G_k$, except at common end-vertices. Condition~(ii) implies that the edges incident to $v_k$ lie in the outer face of $\Gamma_{k-1}$, hence they do not intersect any edge of $G_{k-1}$, except at common end-vertices. Again Condition~(ii) and the $x$-monotonicity of $\mathcal P_{k-1}$ imply that no two edges incident to $v_k$ intersect each other, except at $v_k$. We now prove that the spanning ratio of $\Xi_k$ is smaller than $1+\epsilon$. Consider any two vertices $v_i$ and $v_j$. If $i<k$ and $j<k$, then $\frac{\pi_{\Xi_k}(v_i,v_j)}{\|v_iv_j\|_{\Xi_k}}\leq \frac{\pi_{\Xi_{k-1}}(v_i,v_j)}{\|v_iv_j\|_{\Xi_{k-1}}}<1+\epsilon$. If $i=k$, then $\|v_kv_j\|_{\Xi_k}\geq d$, by Condition~(iii). Consider the path $P(v_k,v_j)$ composed of any edge $v_kv_\ell$ in $H_k$ incident to $v_k$ (which exists since $H_k$ is connected) and of any path in $H_{k-1}$ between $v_\ell$ and $v_j$ (which exists since $H_{k-1}$ is connected). The length of $P(v_k,v_j)$ is at most $d+\delta$ (by Condition~(iii) and by the triangular inequality, this is an upper bound on $\|v_kv_\ell\|_{\Xi_k}$) plus $(k-2) \cdot \delta$ (this is an upper bound on the length of any path in $H_{k-1}$). Hence, $\frac{\pi_{\Xi_k}(v_k,v_j)}{\|v_kv_j\|_{\Xi_k}}<\frac{d+k \delta}{d}<1+\epsilon$.
	This completes the induction and the proof of the lemma.
\end{proof}

Lemmata~\ref{le:augmentation} and~\ref{le:canonical} imply Theorem~\ref{th:planar-graph}. Namely, for any connected planar graph $H$, by Lemma~\ref{le:augmentation} we can construct a maximal planar graph $G$ that, by Lemma~\ref{le:canonical} (with $k=n$) and for every $\epsilon>0$, admits a planar straight-line drawing whose restriction to the vertices and edges of $H$ is a drawing with spanning ratio smaller than $1+\epsilon$. 

The following can be obtained by means of techniques similar to (and simpler than) the ones employed in the proof of Theorem~\ref{th:planar-graph}.

\begin{theorem} \label{th:graphs}
	For every $\epsilon>0$, every connected graph admits a proper straight-line drawing with spanning ratio smaller than $1+\epsilon$.
\end{theorem}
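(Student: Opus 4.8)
The plan is to mirror the two-step strategy used for Theorem~\ref{th:planar-graph}, exploiting the fact that the absence of a planarity requirement makes both steps considerably simpler. The drawing will again be built one vertex at a time, in an order along which every prefix induces a connected subgraph, and each new vertex will be placed so far from the current drawing that any shortest path involving it is almost straight.

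First, I would replace the canonical-ordering machinery of Lemma~\ref{le:augmentation} by the elementary observation that every connected graph $H$ admits an ordering $v_1,v_2,\dots,v_n$ of its vertices such that, for each $k$, the induced subgraph $H_k$ on $\{v_1,\dots,v_k\}$ is connected. Such an ordering is obtained, for instance, by fixing a spanning tree of $H$ and listing its vertices in the order in which a BFS or DFS visits them: every $v_k$ with $k\geq 2$ is then adjacent to an earlier vertex (its tree-parent), so the tree edges among $v_1,\dots,v_k$ already connect $H_k$. No maximality, $2$-connectivity, or outer-face structure is needed, which is exactly why this argument is simpler than the one behind Lemma~\ref{le:augmentation}.

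Second, I would construct the drawing incrementally, as in the proof of Lemma~\ref{le:canonical}. Place $v_1$ arbitrarily. Having drawn $H_{k-1}$ with spanning ratio below $1+\epsilon$, let $\delta$ be the diameter of a disk $D$ enclosing the current drawing and place $v_k$ at a point whose distance to $D$ exceeds $d:=k\delta/\epsilon$, then join it by segments to each of its neighbors in $H_k$. Because planarity is not required, I need no analogue of Conditions~(i)--(ii) of Lemma~\ref{le:canonical}: the only invariant to preserve beyond the spanning ratio is properness, and edge crossings are harmless. The spanning-ratio estimate is then carried over verbatim: for $i,j<k$ the ratio is inherited from $H_{k-1}$, while for a pair involving $v_k$ we have $\|v_kv_j\|\geq d$, whereas a path routed through one edge $v_kv_\ell$ (which exists since $H_k$ is connected) followed by a path inside $H_{k-1}$ has length at most $d+k\delta$, giving ratio $(d+k\delta)/d<1+\epsilon$.

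I do not expect a genuine obstacle, which matches the paper's remark that the result follows by simpler techniques; the only point requiring care is properness. This I would settle by a genericity argument: for the fixed far-away placement region, the positions of $v_k$ that would make it coincide with an existing vertex, lie on an existing edge, or cause one of its incident segments to pass through an existing vertex form a finite union of points, segments, and lines, hence a set of measure zero. Any point of the (two-dimensional, nonempty) placement region avoiding this set yields a proper drawing, and such a point always exists. With properness thus guaranteed at every step and the spanning-ratio bound inherited exactly from the far-placement trick of Lemma~\ref{le:canonical}, the induction goes through and establishes the theorem.
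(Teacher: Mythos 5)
Your proposal is correct and follows essentially the same route as the paper: a vertex ordering whose prefixes induce connected subgraphs, incremental placement of each $v_k$ at distance $d>k\delta/\epsilon$ from a disk enclosing the current drawing, and the identical $(d+k\delta)/d<1+\epsilon$ estimate. The only (immaterial) difference is in how properness is ensured: the paper draws only a spanning tree incrementally while maintaining the stronger invariant that no three vertices are collinear, then adds the remaining edges at the end, whereas you draw all edges as you go and use a per-step genericity argument; both are valid.
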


\begin{proof}
	Consider any $n$-vertex graph $G$ and let $T$ be any spanning tree of $G$. Let $v_1,v_2,\dots,v_n$ be any total ordering for the vertex set of $T$ such that the subtree $T_k$ of $T$ induced by $v_1,v_2,\dots,v_k$ is connected, for each $k=1,2,\dots,n$. 
	
	For $k=1,2,\dots,n$, we construct a straight-line drawing $\Gamma_k$ of $T_k$ with spanning ratio smaller than $1+\epsilon$ and such that no three vertices lie on a straight line. If $k=1$, then $\Gamma_1$ is constructed by placing $v_1$ at any point in the plane. Now assume that a straight-line drawing $\Gamma_{k-1}$ of $T_{k-1}$ has been constructed with spanning ratio smaller than $1+\epsilon$ and such that no three vertices lie on a straight line. Let $\delta$ be the diameter of a disk $D$ containing $\Gamma_{k-1}$ in its interior. We construct $\Gamma_k$ from $\Gamma_{k-1}$ by placing $v_k$ at any point in the plane such that: (1) $v_k$ does not lie on any straight line through two vertices of $T_{k-1}$; and (2) the distance between $v_k$ and the point of $D$ that is closest to $v_k$ is a real value $d>\frac{k \delta}{\epsilon}$.
	
	By Property~(1), no three vertices lie on a straight line in $\Gamma_k$. We prove that the spanning ratio of $\Gamma_k$ is smaller than $1+\epsilon$. Consider any two vertices $v_i$ and $v_j$. If $i<k$ and $j<k$, then $\frac{\pi_{\Gamma_k}(v_i,v_j)}{\|v_iv_j\|_{\Gamma_k}}\leq \frac{\pi_{\Gamma_{k-1}}(v_i,v_j)}{\|v_iv_j\|_{\Gamma_{k-1}}}<1+\epsilon$. If $i=k$, then $\|v_kv_j\|_{\Gamma_k}\geq d$, by Property~(2). Further, $\pi_{\Gamma_k}(v_k,v_j)$ is at most $d+\delta$ (by Property~(2) and by the triangular inequality, this is an upper bound on the length of the edge of $T_k$ incident to $v_k$) plus $(k-2) \cdot \delta$ (this is an upper bound on the length of any path in $T_{k-1}$). Hence, $\frac{\pi_{\Gamma_k}(v_k,v_j)}{\|v_kv_j\|_{\Gamma_k}}<\frac{d+k \delta}{d}<1+\epsilon$.

	A drawing $\Gamma$ of $G$ is obtained from the drawing $\Gamma_n$ of $T=T_n$ by drawing the edges that are not in $T$ as straight-line segments. Then $\Gamma$ is proper, as no three vertices of $T$ lie on a straight line in $\Gamma_n$, and has spanning ratio smaller than $1+\epsilon$, as the same is true for $\Gamma_n$.
\end{proof}

%

\section{Drawings with Small Spanning Ratio and Edge-Length Ratio} \label{se:area}

In this section we study straight-line drawings with small spanning ratio and edge-length ratio. Our main result is the following.


\begin{theorem} \label{th:tough} 
	For every $\epsilon>0$ and $\tau>0$, every $n$-vertex graph with toughness~$\tau$ admits a proper straight-line drawing whose spanning ratio is at most $1+\epsilon$ and whose edge-length ratio is in $\mathcal O\left(n^{\frac{\log_2 (2+ \lceil 2/\epsilon \rceil)}{\log_2(2+\lceil 1/\tau \rceil)-\log_2(1+\lceil 1/\tau \rceil)}}\cdot 1/\epsilon\right)$.
	
	Further, for every $0<\tau<1$, there is a graph $G$ with toughness $\tau$ such that every straight-line drawing of $G$ with spanning ratio at most $s$ has edge-length ratio in $2^{\Omega(1/(\tau\cdot s^2))}$.
\end{theorem}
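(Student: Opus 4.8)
The plan is to establish the two halves of Theorem~\ref{th:tough} separately: an explicit construction for the lower bound, and a recursive drawing algorithm driven by a toughness-based splitting lemma for the upper bound.

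\textbf{Lower bound.} For the graph witnessing the lower bound I would take the \emph{complete split graph}: a clique $C=\{c_1,\dots,c_h\}$ together with an independent set $L=\{\ell_1,\dots,\ell_k\}$, every $\ell_i$ joined to every $c_j$, where $h,k$ are integers with $h/k=\tau$ (finite-graph toughness is always rational, so this is the natural realization of a given $\tau$). First I would verify $\tau(G)=\tau$: the only way to disconnect $G$ is to delete all of $C$ (deleting a proper subset leaves a smaller clique still adjacent to every leaf, hence a connected graph), after which the surviving $k'$ leaves are isolated, and $|S|/c(G-S)=(h+k-k')/k'$ is minimized at $k'=k$, giving $h/k=\tau$. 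Now fix a straight-line drawing $\Gamma$ with spanning ratio at most $s$, assign each leaf to a \emph{nearest} center, and let $c$ be a center that is nearest to a set $L'$ of at least $k/h=1/\tau$ leaves (pigeonhole).

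The heart of the argument is that $L'\cup\{c\}$ behaves like a star. For $\ell,\ell'\in L'$, any $\ell$–$\ell'$ path begins with an edge to some center $c_i$ and ends with an edge from some center $c_j$ (the leaves are independent), so its length is at least $\|\ell c_i\|+\|c_j\ell'\|\ge\|\ell c\|+\|\ell' c\|$, the last inequality using that $c$ is the nearest center of both $\ell$ and $\ell'$; the route $\ell,c,\ell'$ attains this, so $\pi_\Gamma(\ell,\ell')=\|\ell c\|+\|\ell' c\|$. Writing $r_\ell=\|\ell c\|$, the spanning bound becomes $r_\ell+r_{\ell'}\le s\,\|\ell\ell'\|$ for every pair in $L'$, which is exactly the constraint governing a star. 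A packing argument then finishes: dividing the directions around $c$ into $\Theta(s)$ angular sectors of width $\Theta(1/s)$, any two leaves of $L'$ in a common sector must have radii differing by a factor $1+\Omega(1/s)$ (otherwise the detour through $c$ is too long relative to $\|\ell\ell'\|$), so some sector holds $\Omega(|L'|/s)$ leaves whose radii span a factor $(1+\Omega(1/s))^{\Omega(|L'|/s)}=2^{\Omega(|L'|/s^2)}$. Since the $r_\ell$ are lengths of edges of $G$ and $|L'|\ge1/\tau$, the edge-length ratio is $2^{\Omega(1/(\tau s^2))}$. This part reduces to the star lower bound already advertised in the introduction, so I expect it to be the routine half.

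\textbf{Upper bound.} For the upper bound I would draw $G$ by balanced recursion. The engine is a splitting lemma: with $a=\lceil1/\tau\rceil$, every connected graph with toughness $\tau$ and enough vertices admits a partition $V=A\sqcup B$ such that $G[A]$ and $G[B]$ are connected, at least one edge joins $A$ and $B$, and $\tfrac{n}{2+a}\le|A|\le|B|\le\tfrac{(1+a)n}{2+a}$. Granting it, I recursively draw $A$ and $B$ with spanning ratio at most $1+\epsilon$ inside two boxes of diameters $\delta_A,\delta_B$ and place them at distance $L=\Theta((\delta_A+\delta_B)/\epsilon)$. A pair inside $A$ (or inside $B$) keeps its inductive ratio; a crossing pair $u\in A$, $v\in B$ routes through a single crossing edge, of total length at most $\delta_A+(L+\delta_A+\delta_B)+\delta_B=L+2(\delta_A+\delta_B)$ against Euclidean distance at least $L$, hence ratio at most $1+\tfrac{2(\delta_A+\delta_B)}{L}\le1+\epsilon$. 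After choosing the separation and the relative placement of the two boxes carefully, each level multiplies the diameter by at most $2+\lceil2/\epsilon\rceil$, and since the larger part has at most $\tfrac{(1+a)n}{2+a}$ vertices the recursion depth is $\log_{(2+a)/(1+a)}n=\tfrac{\log_2 n}{\log_2(2+a)-\log_2(1+a)}$. Multiplying, the longest edge is at most $(2+\lceil2/\epsilon\rceil)^{\mathrm{depth}}=n^{\log_2(2+\lceil2/\epsilon\rceil)/(\log_2(2+a)-\log_2(1+a))}$ times the shortest edge in the base drawings; handling the base cases (graphs of size $O(a)$, where the split fails because all components are small) with a direct drawing contributes the extra $1/\epsilon$ factor, and a generic choice of all placements makes the drawing proper.

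\textbf{Main obstacle.} The crux is the splitting lemma, i.e.\ converting the global toughness hypothesis into a \emph{balanced} partition into two connected pieces. I would argue by cases. If $G$ is $2$-connected, an $st$-numbering yields an ordering in which every prefix and every suffix induces a connected subgraph, so cutting at the middle gives a perfectly balanced connected split. Otherwise I would locate, by a centroid-type search guided by a spanning tree, a cut vertex $v$ all of whose $G-v$ components have at most $\tfrac{(1+a)n}{2+a}$ vertices; toughness bounds the number of these components by $1/\tau\le a$, so either some component has size in $[\tfrac{n}{2+a},\tfrac{(1+a)n}{2+a}]$ — take it as $A$ and $\{v\}$ together with the remaining components (each adjacent to $v$) as the connected $B$ — or every component has fewer than $\tfrac{n}{2+a}$ vertices, which combined with the count $\le a$ forces $n<\tfrac{a}{2}+1$ and reduces to the base case. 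Making this centroid search provably terminate at such a $v$ in a general graph (not merely a tree) while keeping both sides connected is the delicate point where I expect to spend the most effort.
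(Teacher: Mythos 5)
Your lower bound is sound and, modulo packaging, is the paper's argument: the paper proves (Theorem~\ref{th:lower}) that any drawing of a tree with a degree-$d$ vertex and spanning ratio $s$ has edge-length ratio $2^{\Omega(d/s^2)}$, via a packing of disjoint disks in dyadic annuli around the high-degree vertex, and then uses the fact that a maximum-degree-$d$ tree has toughness $1/d$. Your complete split graph and the pigeonhole onto a nearest center reduce to the same star-type constraint $r_\ell+r_{\ell'}\le s\|\ell\ell'\|$, and your sector-based packing is an acceptable variant of the paper's annulus-based one; your construction even realizes every rational $\tau\in(0,1)$ exactly, which the tree examples do not. This half is fine.

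The upper bound, however, has a genuine gap, and it sits exactly where you flagged ``the delicate point'': the splitting lemma cannot carry the recursion. Even if you could prove that every $\tau$-tough connected graph admits a partition into connected parts $A,B$ with $|A|,|B|\ge n/(2+a)$, the recursion must then split $G[A]$ and $G[B]$ again with the \emph{same} parameter $a$ --- but toughness is not inherited by connected induced subgraphs. Concretely, take the complete split graph with $h=k=m$ (toughness $1$, so $a=1$ and every level needs parts of size at least a third): the perfectly balanced first split $A=\{c_1,\ell_1,\dots,\ell_{m-1}\}$, $B=\{c_2,\dots,c_m,\ell_m\}$ makes $G[A]$ a star $K_{1,m-1}$, and a star admits no balanced partition into two connected parts at all (any connected part avoiding the center is a single vertex). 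Nothing in your proposal forces the split to avoid producing such subgraphs, and your two-case proof of the lemma itself (an $st$-numbering prefix, or a centroid cut vertex) does not cover a graph consisting of one large $2$-connected block with small appendages, nor does the $st$-numbering case control what the prefix induces. The paper sidesteps all of this with one known theorem you are missing: a graph with toughness $\tau$ has a spanning tree of maximum degree $d=\lceil 1/\tau\rceil+2$ (Win's theorem). Bounded degree \emph{is} hereditary under taking subtrees, so the classical tree-separator fact --- some edge splits a max-degree-$d$ tree into parts of at most $\frac{d-1}{d}n$ vertices each --- can be applied at every level, yielding exactly your intended recursion depth $\log_{d/(d-1)}n$ and per-level blow-up $2+\lceil 2/\epsilon\rceil$ (Theorem~\ref{th:trees-upperbound-nonplanar}); the non-tree edges of $G$ are drawn as segments at the end, which can only decrease the spanning ratio, and properness is guaranteed by maintaining that no three vertices are collinear (your ``generic placement''). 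To repair your proof, replace the graph-splitting lemma by this spanning-tree reduction; as written, the recursive step is not justified and in fact fails on explicit inputs.
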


In order to prove Theorem~\ref{th:tough}, we study straight-line drawings of bounded-degree trees. This is because there is a strong connection between the toughness of a graph and the existence of a spanning tree with bounded degree. Indeed, if a graph $G$ has toughness $\tau$, then it has a spanning tree with maximum degree $\lceil 1/\tau \rceil+2$~\cite{w-cbet-89}. Further, a tree has toughness equal to the inverse of its maximum degree. We start by proving the following upper bound.

\begin{theorem} \label{th:trees-upperbound-nonplanar} 
	For every $\epsilon>0$, every $n$-vertex tree $T$ with maximum degree $d$ admits a proper straight-line drawing such that no three vertices are collinear, the spanning ratio is at most $1+\epsilon$, the distance between any two vertices is at least $1$, and the width, the height, and the edge-length ratio are in $\mathcal O\left(n^{\frac{\log_2 (2+ \lceil 2/\epsilon \rceil)}{\log_2(d/(d-1))}}\cdot 1/\epsilon\right)$.
\end{theorem}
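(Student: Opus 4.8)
The plan is to build the drawing by a divide-and-conquer layout that repeatedly splits $T$ at an edge separator, draws the two resulting subtrees recursively inside two small axis-parallel squares, and then places these squares far apart before reconnecting them by the separator edge. This refines the ``place the next vertex very far away'' idea of Theorem~\ref{th:graphs}: there the far placement was done one vertex at a time, which forces exponential edge-length ratio, whereas here it is done once per recursion level, keeping the blow-up polynomial. Throughout, I would maintain for the drawing of every subtree $T'$ on $m$ vertices the invariant that it is proper, has no three collinear vertices, has spanning ratio at most $1+\epsilon$, has minimum inter-vertex distance at least $1$, and is contained in an axis-parallel square of side $W(m)$ depending only on $m$. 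The width and height are then at most $W(n)$; since the shortest edge has length at least $1$ and the longest is at most the square's diagonal, the edge-length ratio is $O(W(n))$ as well, so it suffices to prove $W(n)\in O\!\left(n^{\frac{\log_2 (2+ \lceil 2/\epsilon \rceil)}{\log_2(d/(d-1))}}\cdot 1/\epsilon\right)$.

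First I would prove the combinatorial separator lemma: every $m$-vertex tree of maximum degree $d$ has an edge whose removal yields two subtrees, the larger having at most $\frac{d-1}{d}m+1$ vertices. Take a centroid $c$, whose removal leaves at most $d$ components each of size at most $\lfloor m/2\rfloor$; the largest component $C$ has size between $\frac{m-1}{d}$ and $\lfloor m/2\rfloor$, so cutting the edge joining $c$ to $C$ gives parts of sizes $|C|\le\frac{d-1}{d}m$ and $m-|C|\le\frac{d-1}{d}m+1$, with both parts having at least $\frac{m}{d}$ vertices. Recursing on the larger part thus shrinks the instance by a factor essentially $\frac{d-1}{d}$ per level, giving recursion depth $\log_{d/(d-1)} m=\frac{\log_2 m}{\log_2(d/(d-1))}$, which is exactly the denominator of the target exponent.

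The geometric step is next. Let $e=a'b'$ be the separator edge, with $a'$ in $T_1$ and $b'$ in $T_2$, and let $\Gamma_1,\Gamma_2$ be the inductive drawings, each in a square of side at most $W'=W(\frac{d-1}{d}m+1)$. I would place $\Gamma_1$ and $\Gamma_2$ inside a square of side $(2+\lceil 2/\epsilon\rceil)W'$ so that the gap between them (hence the length of $e$) is $\Theta(W'/\epsilon)$, draw $e$, and apply a generic infinitesimal rotation/translation so that the result is proper and has no three collinear vertices (only finitely many directions and positions are forbidden). For two vertices in the same square the spanning-ratio bound is inherited, since in a tree the unique path between them never leaves that square. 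For $a\in T_1$ and $b\in T_2$, the unique path $a\leadsto a'\to b'\leadsto b$ has length at most $(1+\epsilon)(\mathrm{diam}(\Gamma_1)+\mathrm{diam}(\Gamma_2))+\|a'b'\|$ by induction, while $\|ab\|\ge\|a'b'\|-\mathrm{diam}(\Gamma_1)-\mathrm{diam}(\Gamma_2)$; a short computation shows that $\|a'b'\|\ge\frac{2(1+\epsilon)}{\epsilon}(\mathrm{diam}(\Gamma_1)+\mathrm{diam}(\Gamma_2))$ forces this ratio below $1+\epsilon$. The same large gap keeps all cross distances well above $1$, preserving the minimum-distance invariant, and the enclosing square grows by a factor $2+\lceil 2/\epsilon\rceil$ per level, yielding $W(m)\le (2+\lceil 2/\epsilon\rceil)\,W(\frac{d-1}{d}m+1)$ with a base case of size $O(1/\epsilon)$; unrolling over the $\frac{\log_2 m}{\log_2(d/(d-1))}$ levels gives the claimed bound.

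The main obstacle is making the two constants in the exponent tight simultaneously. Matching the denominator needs the separator to reduce the size by a factor as close to $\frac{d-1}{d}$ as possible, so a weaker balance (e.g.\ a generic $\tfrac12$-centroid split into up to $d$ subtrees) would spoil the exponent; matching the numerator requires the per-level square to grow by only $2+\lceil 2/\epsilon\rceil$, which forces a careful accounting of the separation, since the crude diameter estimate above suggests a constant larger than $2$ in front of $1/\epsilon$. The delicate part is therefore packing the two sub-squares so that the separation realizing $e$ is at most $\lceil 2/\epsilon\rceil$ times their side, while keeping \emph{both} the width and the height (not merely the width) inside the enclosing square; the remaining requirements (properness, no three collinear) are then routine, as they fail only on a measure-zero set of the generic perturbations and are preserved by the isometries used, so that no final rescaling of the unit distance is needed.
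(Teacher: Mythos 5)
Your overall architecture --- recurse on a $\frac{d-1}{d}$-balanced edge separator, draw the two pieces recursively, place them far apart, reconnect by the separator edge, and read off a polynomial recurrence over the $\log_{d/(d-1)} n$ recursion levels --- is exactly the paper's. The separator lemma, the inheritance of the spanning ratio for intra-subtree pairs (the unique tree path stays inside one piece), the generic perturbation for properness and non-collinearity, and the cross-pair calculation requiring a gap of order $W'/\epsilon$ are all sound and appear in essentially the same form in the paper's proof.

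The gap is the one you flagged yourself and then left unresolved: with both sub-drawings confined to \emph{squares} of side $W'$, your cross-pair estimate needs $\|a'b'\|\ge\frac{2(1+\epsilon)}{\epsilon}\left(\mathrm{diam}(\Gamma_1)+\mathrm{diam}(\Gamma_2)\right)$, and with $\mathrm{diam}(\Gamma_i)=\sqrt{2}\,W'$ this forces a per-level growth factor of roughly $2+\frac{4\sqrt{2}(1+\epsilon)}{\epsilon}$ rather than $2+\lceil 2/\epsilon\rceil$. Since the growth factor sits inside a $\log_2$ in the exponent of $n$, this is not a constant-factor loss: it worsens the polynomial degree by an additive $\Theta\left(1/\log_2(d/(d-1))\right)$, so the stated bound is not reached. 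The paper's missing ingredient is an asymmetric invariant: the height of every recursive drawing is kept below a fixed constant $\eta$ (each recursive call is given budget $\eta/3$), the root sits at the top-left corner of the bounding box, and every root-to-vertex path is drawn $xy$-monotone. Then $\pi_\Gamma(w,z)=\pi_\Gamma(w,r)+\pi_\Gamma(z,r)$ is bounded by the horizontal extent plus an $O(\eta)$ vertical term, the Euclidean distance of a cross pair is at least the horizontal gap $\gamma\cdot(w_1+\eta+1)$ with $\gamma=\lceil 2/\epsilon\rceil$, and the ratio comes out as exactly $\frac{\gamma+2}{\gamma}\le 1+\epsilon$; only the width recurses, via $w(n)\le(\gamma+2)\bigl(w(\tfrac{d-1}{d}n)+\eta+1\bigr)$, which yields the exponent $\frac{\log_2(2+\lceil 2/\epsilon\rceil)}{\log_2(d/(d-1))}$. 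Without some such monotonicity invariant (or another device making path length comparable to horizontal displacement), your square-in-square packing proves only a weaker polynomial bound, not the one in the statement.
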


\begin{proof}
	Let $\gamma=\lceil \frac{2}{\epsilon}\rceil $. Root $T$ at any vertex $r$. For any two vertices $p$ and $q$ of $T$, let $P_{pq}$ be the path in $T$ from $p$ to $q$. We prove that, for an arbitrary real value $\eta>0$, $T$ admits a proper straight-line drawing $\Gamma$ that, in addition to the properties in the statement of the theorem, satisfies the following: (1) $r$ is at the top-left corner of $\mathcal B(\Gamma)$; (2) for every vertex $z$ of $T$, the path $P_{zr}$ monotonically decreases in the $x$-direction and monotonically increases in the $y$-direction from $z$ to $r$; and (3) the height of $\Gamma$ is at most $\eta$. 
	
	If $n=1$, then $\Gamma$ is obtained by placing $r$ at any point in the plane. If $n>1$, then there exists an edge $uv$ whose removal separates $T$ into two trees $T_1$ and $T_2$, each with at most $\frac{d-1}{d}n$ vertices~\cite{v-ucvc-81}. Assume, w.l.o.g., that $T_1$ contains $r$ and $u$, while $T_2$ contains $v$. Then $T_1$ is rooted at $r$ and $T_2$ is rooted at $v$. Inductively construct proper straight-line drawings $\Gamma_1$ and $\Gamma_2$ of $T_1$ and $T_2$, respectively, with parameter $\eta/3$ satisfying Properties~1--4. Let $w_1$ and $w_2$ be the widths of $\Gamma_1$ and $\Gamma_2$, respectively. Refer to Figure~\ref{fig:trees-nonplanar}.

	\begin{figure}[htb]\tabcolsep=4pt
		\centering
		\includegraphics[scale=1.2]{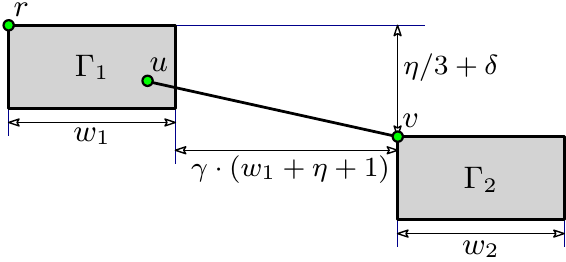}
		\caption{Illustration for the construction in Theorem~\ref{th:trees-upperbound-nonplanar}.}
		\label{fig:trees-nonplanar}
	\end{figure}
	
	Translate $\Gamma_1$ so that $r$ lies at $(0,0)$. Further, translate $\Gamma_2$ so that $v$ lies at $(w_1+\gamma\cdot (w_1+\eta+1),-\eta/3-\delta)$, where $0<\delta<\eta/3$ is a real value chosen so that no line through two vertices in the same tree $T_i$, with $i\in\{1,2\}$, overlaps a vertex in the tree $T_j$, with $j\in\{1,2\}$ and $j\neq i$. Note that, since $\Gamma_1$ and $\Gamma_2$ are proper and since $\mathcal B(\Gamma_1)$ and $\mathcal B(\Gamma_2)$ are disjoint, there are finitely many values of $\delta$ for which the line through two vertices in a tree $T_i$ overlaps a vertex in a different tree $T_j$, hence such a value $\delta$ always exists. 
	
	We now analyze the properties of $\Gamma$. By construction, $\Gamma$ is a straight-line drawing of $T$. 
	
	By induction, no three vertices are collinear in each of $\Gamma_1$ and $\Gamma_2$; further, by construction, $\Gamma_1$ and $\Gamma_2$ are arranged so that no line through two vertices in the same tree $T_i$, with $i\in\{1,2\}$, overlaps a vertex in the tree $T_j$, with $j\in\{1,2\}$ and $j\neq i$. Hence, no three vertices are collinear in $\Gamma$, and in particular.  $\Gamma$ is proper.
	
	Property~(1) is satisfied by $\Gamma$ given that $r$ is at the top-left corner of $\mathcal B(\Gamma_1)$, by induction, and given that every vertex of $T_2$ lies to the right and below $r$ in $\Gamma$, by construction. 
	
	In order to prove that $\Gamma$ satisfies Property~(2), consider any vertex $z$ of $T$. If $z$ belongs to $T_1$, then $P_{zr}$ monotonically decreases in the $x$-direction and monotonically increases in the $y$-direction from $z$ to $r$, since $\Gamma_1$ satisfies Property~(2). If $z$ belongs to $T_2$, then $P_{zr}$ is composed of the path $P_{zv}$, of the edge $vu$, and of the path $P_{ur}$. The paths $P_{zv}$ and $P_{ur}$ monotonically decrease in the $x$-direction and monotonically increase in the $y$-direction from $z$ to $v$ and from $u$ to $r$, respectively, since $\Gamma_2$ and $\Gamma_1$ satisfy Property~(2). Further, the $x$-coordinate of $u$ is smaller than the one of $v$ and the $y$-coordinate of $u$ is larger than the one of $v$; the latter follows from the fact that every vertex of $T_1$ has $y$-coordinate in $[-\eta/3,0]$, while every vertex of $T_2$ has $y$-coordinate smaller than $-\eta/3$. 
	
	The height of $\Gamma$ is at most $2\eta/3 + \delta$, which is smaller than $\eta$, hence $\Gamma$ satisfies Property~(3). 
	
	We now discuss the spanning ratio of $\Gamma$. We prove that, for any vertex $w$ of $T_1$ and any vertex $z$ of $T_2$, it holds true that $\frac{\pi_\Gamma(w,z)}{\|wz\|_\Gamma}\leq \frac{\gamma+2}{\gamma}$. This suffices to prove that the spanning ratio of $\Gamma$ is at most $\frac{\gamma+2}{\gamma}$, since the drawings of $T_1$ and $T_2$ in $\Gamma$ are the ones inductively constructed by the algorithm. The distance between $w$ and $z$ is larger than or equal to $\gamma\cdot (w_1 + \eta+1)+x_z$, where $\gamma\cdot (w_1 + \eta+1)$ is the horizontal distance between $\mathcal B_r(\Gamma_1)$ and $\mathcal B_l(\Gamma_2)$, while $x_z$ denotes the distance between $\mathcal B_l(\Gamma_2)$ and $z$.
	Clearly, we have $\pi_\Gamma(w,z) = \pi_\Gamma(w,r) + \pi_\Gamma(z,r)$. The path $P_{wr}$ is monotone in the $x$- and $y$-directions, hence $\pi_\Gamma(w,r)$ is upper bounded by the horizontal distance between $w$ and $r$, which is at most $w_1$, plus the vertical distance between $w$ and $r$, which is at most $\eta/3$. Analogously, $\pi_\Gamma(z,r)$ is upper bounded by the horizontal distance between $z$ and $r$, which is $w_1+\gamma\cdot (w_1 + \eta +1) +x_z$, plus the vertical distance between $z$ and $r$, which is at most $\eta$. Hence, $\pi_\Gamma(w,z)< (\gamma+2) \cdot (w_1+\eta+1)+x_z$. Thus:
	
	$$\frac{\pi_\Gamma(u,v)}{\|uv\|_\Gamma}< \frac{(\gamma+2) \cdot (w_1+\eta+1+\frac{x_z}{\gamma})}{\gamma\cdot (w_1 + \eta+1+\frac{x_z}{\gamma})}\leq \frac{\gamma+2}{\gamma}\leq 1+\epsilon.$$
	
	Finally, we analyze the edge-length ratio of $\Gamma$. Note that the distance between any vertex of $T_1$ and any vertex of $T_2$ is larger than $1$, hence the same is true for every pair of vertices of $T$. In particular, the length of every edge is larger than $1$. Thus, the edge-length ratio of $\Gamma$ is upper bounded by the maximum length of an edge of $T$. In turn, this is at most the height plus the width of $\Gamma$. By Property~(3), the height of $\Gamma$ is at most $\eta$. By construction, the width of $\Gamma$ is equal to $w_1 + \gamma\cdot (w_1 + \eta +1) +w_2$. Denote by $w(n)$ the maximum width of a drawing of an $n$-vertex tree constructed by the above algorithm. Since each of $T_1$ and $T_2$ has at most $\frac{d-1}{d}n$ vertices, we get that $w(n)\leq (\gamma+2) \cdot (w(\frac{d-1}{d}n)+\eta+1)$. Repeatedly substituting this inequality into itself and recalling that $w(n)=0$ for $n\leq 1$, we get $w(n)\leq (1+\eta)\cdot(\gamma+2)  + (1+\eta)\cdot(\gamma+2)^2 + \dots +  (1+\eta)\cdot(\gamma+2)^{\lceil\log_{\frac{d}{d-1}}(n)\rceil}\leq (1+\eta)\cdot\frac{\gamma+2}{\gamma+1}\cdot(\gamma+2)^{\log_{\frac{d}{d-1}}(n)+1}=(1+\eta)\cdot\frac{\gamma+2}{\gamma+1}\cdot(\gamma+2)\cdot n^{\frac{\log_2 (\gamma+2)}{\log_2(d/(d-1))}}$. We have $\frac{\gamma+2}{\gamma+1}\leq 2$, given that $\gamma=\lceil \frac{2}{\epsilon}\rceil\geq 1$; further, we can set $\eta$ to be any constant, say $\eta=1$. Thus, we get $w(n)\in \mathcal O\left(n^{\frac{\log_2 (2+ \lceil 2/\epsilon \rceil)}{\log_2(d/(d-1))}}\cdot 1/\epsilon\right)$ and the same holds true for the edge-length ratio of $\Gamma$. 
\end{proof}

We can now prove the upper bound in Theorem~\ref{th:tough}. Consider an $n$-vertex graph $G$ with toughness $\tau$ and let $\epsilon>0$; then $G$ has a spanning tree $T$ with maximum degree $d=\lceil 1/\tau \rceil+2$~\cite{w-cbet-89}. Apply Theorem~\ref{th:trees-upperbound-nonplanar} in order to construct a straight-line drawing $\Gamma_T$ of $T$. Construct a straight-line drawing $\Gamma_G$ of $G$ from $\Gamma_T$ by representing the edges of $G$ not in $T$ as straight-line segments. This results in a proper drawing of $G$, given that no three vertices are collinear in $\Gamma_T$. Further, the spanning ratio of $\Gamma_G$ is at most the one of $\Gamma_T$, hence it is at most $1+\epsilon$. Finally, the edge-length ratio of $\Gamma_G$ is in $\mathcal O\left(n^{\frac{\log_2 (2+ \lceil 2/\epsilon \rceil)}{\log_2(d/(d-1))}}\cdot 1/\epsilon\right)$, given that the distance between any two vertices in $\Gamma_T$ (and hence in $\Gamma_G$) is at least $1$ and given that the width and height of $\Gamma_T$ (and hence of $\Gamma_G$) are in $\mathcal O\left(n^{\frac{\log_2 (2+ \lceil 2/\epsilon \rceil)}{\log_2(d/(d-1))}}\cdot 1/\epsilon\right)$. Substituting the value $d=\lceil 1/\tau \rceil+2$ provides us with the upper bound in Theorem~\ref{th:tough}. 

The lower bound in Theorem~\ref{th:tough} comes from the following theorem.

\begin{theorem} \label{th:lower}
	Let $T$ be a tree with a vertex of degree $d$. For any $s\geq 1$, any straight-line drawing of $T$ with spanning ratio at most $s$ has edge-length ratio in $2^{\Omega(d/s^2)}$.
\end{theorem}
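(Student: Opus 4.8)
The plan is to establish the lower bound by exhibiting a vertex of high degree and arguing that its many edges, together with the spanning-ratio constraint, force a large spread of edge lengths. Let $c$ be a vertex of $T$ of degree $d$, with neighbors $u_1,u_2,\dots,u_d$. In any straight-line drawing $\Gamma$ of $T$, these neighbors emanate from $c$ along $d$ distinct rays. The key geometric observation is that two neighbors $u_i$ and $u_j$ whose edges to $c$ make a small angle must be connected in $\Gamma$ by a path that is not much longer than $\|u_iu_j\|$; but the only path in the tree $T$ between $u_i$ and $u_j$ passes through $c$, so its length is exactly $\|cu_i\|_\Gamma + \|cu_j\|_\Gamma$. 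Feeding this into the spanning-ratio bound yields a constraint relating $\|cu_i\|$, $\|cu_j\|$, and the angle between the two edges.

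First I would make the angle argument quantitative. By the law of cosines, if the edges $cu_i$ and $cu_j$ form an angle $\theta_{ij}$, then
\[
\|u_iu_j\|^2 = \|cu_i\|^2 + \|cu_j\|^2 - 2\|cu_i\|\|cu_j\|\cos\theta_{ij}.
\]
The spanning-ratio hypothesis gives $\|cu_i\|+\|cu_j\| \le s\,\|u_iu_j\|$. Combining these, when $\theta_{ij}$ is small the right-hand side of the cosine formula is small unless the two radii $\|cu_i\|$ and $\|cu_j\|$ differ substantially; concretely, a small angle forces $\|u_iu_j\|$ to be close to $\big|\|cu_i\|-\|cu_j\|\big|$, so the constraint $\|cu_i\|+\|cu_j\| \le s\big(\text{something close to }|\|cu_i\|-\|cu_j\||\big)$ can only hold if the ratio $\max(\|cu_i\|,\|cu_j\|)/\min(\|cu_i\|,\|cu_j\|)$ is bounded below by a quantity growing as $\theta_{ij}$ shrinks. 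I would extract a clean bound of the form: if $\theta_{ij}\le \theta$, then the ratio of the two edge lengths is at least some explicit increasing function of $1/\theta$ and a decreasing function of $s$, roughly $1+\Omega(\theta^2 / s^2)$ after a Taylor expansion of the cosine.

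Next I would invoke a pigeonhole argument on the angles. The $d$ edges leave $c$ in $d$ distinct directions spread around a full $2\pi$ angle, so by averaging there exist two consecutive edges (in cyclic angular order) whose separating angle is at most $2\pi/d$. More usefully, order the neighbors by angle and consider the $d-1$ consecutive angular gaps summing to at most $2\pi$; sorting the neighbors instead by their radial distance $\|cu_i\|$ and chaining the per-pair ratio bounds along consecutive small-angle pairs lets the individual multiplicative factors compound. The cleanest route is to sort neighbors by angle so that consecutive ones have gap at most $2\pi/d$, apply the per-pair bound to get a factor $1+\Omega(1/(d^2 s^2))$ between each consecutive pair of radii, and multiply the $\Theta(d)$ such factors to obtain a total ratio of $\big(1+\Omega(1/(d^2 s^2))\big)^{\Theta(d)}$. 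The hard part will be setting up the chaining so the factors genuinely multiply rather than merely bounding one pair: I expect that sorting by radial length and showing that along this sorted order the angular gaps can still be taken small enough yields a telescoping product, and that the resulting exponent simplifies to $\Omega(d/s^2)$, giving edge-length ratio $2^{\Omega(d/s^2)}$. Care must be taken that the ``at least $1$'' normalization and the worst edge in the tree (which may lie far from $c$) do not weaken the bound, but since the ratio between the longest and shortest of the $d$ edges at $c$ already exceeds this quantity, the global edge-length ratio is at least as large.
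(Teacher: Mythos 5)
Your local observation is the right one---the tree path between any two neighbors $u_i,u_j$ of the degree-$d$ vertex $c$ passes through $c$, so the spanning-ratio hypothesis forces $\|cu_i\|+\|cu_j\|\le s\,\|u_iu_j\|$, which prevents two neighbors from being geometrically close unless their distances to $c$ differ substantially. But the step that turns this pairwise constraint into an exponential global bound is exactly the step you leave open, and as described it does not work. If you sort the neighbors by angle so that consecutive gaps are at most $2\pi/d$, the per-pair ratios $\max/\min$ do not telescope: the radii can oscillate, so the product of consecutive ratios is not the global ratio (e.g.\ radii alternating between two values make every consecutive factor equal to the same constant while the global ratio stays bounded). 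If instead you sort by radius so the ratios do telescope, there is no reason the angular gap between radius-consecutive neighbors is small---it can be as large as $\pi$, in which case the law of cosines gives no constraint at all. Your own hedge (``the hard part will be setting up the chaining'') is precisely where the proof is missing. On top of this, the final arithmetic is wrong even granting the chaining: $\bigl(1+\Omega(1/(d^2s^2))\bigr)^{\Theta(d)}=2^{\Omega(1/(d s^2))}$, which tends to $1$ as $d$ grows; it does not simplify to $2^{\Omega(d/s^2)}$. To get that bound from a product of $\Theta(d)$ factors you would need each factor to be $1+\Omega(1/s^2)$, which your per-pair estimate does not deliver.

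The paper's proof replaces the chaining by a counting argument that sidesteps both issues. Normalize so the shortest edge at $c$ has length $1$ and consider the dyadic annuli $\mathcal A_i=\{x: 2^{i-1}\le\|x-c\|\le 2^i\}$. Any two neighbors of $c$ in the same $\mathcal A_i$ have tree-distance at least $2^i$, so by the spanning-ratio bound their Euclidean distance is at least $2^i/s$; hence disks of diameter $2^{i-2}/s$ around them (kept inside $\mathcal A_i$) are pairwise disjoint, and an area comparison shows $\mathcal A_i$ contains at most $48 s^2$ neighbors. Therefore the $d$ neighbors must spread over $\Omega(d/s^2)$ annuli, and some edge at $c$ has length $2^{\Omega(d/s^2)}$ times the shortest one. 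If you want to salvage your angular viewpoint, you must combine it with a radial subdivision fine enough that within one cell two neighbors are forbidden outright (rather than merely forced to have slightly different radii), which essentially reconstructs this packing argument.
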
 

\begin{proof}
	For any $s\geq 1$, let $\Gamma$ be any straight-line drawing of $T$ with spanning ratio at most~$s$; refer to Figure~\ref{fig:lower}. Let $u_T$ be a vertex of degree $d$. Assume w.l.o.g.\ up to a scaling (which does not alter the edge-length ratio and the spanning ratio of $\Gamma$) that the length of the shortest edge incident to $u_T$ in $\Gamma$ is $1$. For any integer $i\geq 0$, let $\mathcal C_i$ be the circle centered at $u_T$ whose radius is $r_i=2^i$. Further, for any integer $i>0$, let $\mathcal A_i$ be the closed annulus delimited by $\mathcal C_{i-1}$ and $\mathcal C_{i}$. By assumption, no neighbor of $u_T$ lies inside the open disk delimited by $\mathcal C_0$. We claim that, for any integer $i>0$ and for some constant $c$, there are at most $c\cdot s^2$ neighbors of $u_T$ inside $\mathcal A_i$. This implies that at most $k\cdot c\cdot s^2$ neighbors of $u_T$ lie inside the closed disk delimited by $\mathcal C_{k}$. Hence, if $d>k\cdot c\cdot s^2$, e.g., if $k=\lfloor \frac{d-1}{c\cdot s^2}\rfloor$, then there is a neighbor $v_T$ of $u_T$ outside $\mathcal C_{k}$. Then $\|u_Tv_T\|>2^{k}\in 2^{\Omega(d/s^2)}$. Hence, the theorem follows from the claim.
	
	\begin{figure}[htb]\tabcolsep=4pt
		\centering
		\includegraphics[scale=1.1]{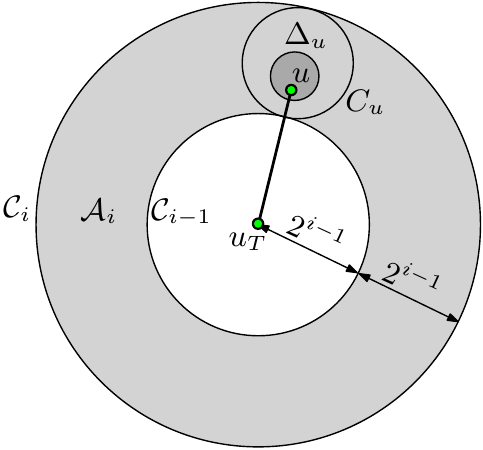}
		\caption{Illustration for the proof of Theorem~\ref{th:lower}.}
		\label{fig:lower}
	\end{figure}
	
	It remains to prove the claim. For each neighbor $u$ of $u_T$ inside $\mathcal A_i$, let $\Delta_u$ be a closed disk such that: (i) $u$ lies inside $\Delta_u$; (ii) $\Delta_u$ lies inside $\mathcal A_i$; and (iii) the diameter of $\Delta_u$ is $\delta_i=2^{i-2}/s$. The existence of $\Delta_u$ can be proved as follows. Consider the circle $C_u$ whose antipodal points are the intersection points of $\mathcal C_{i-1}$ and $\mathcal C_{i}$ with the ray from $u_T$ through $u$. Note that $C_u$ lies inside $\mathcal A_i$ and has diameter $2^{i-1}>\delta_i=2^{i-2}/s$. Then $\Delta_u$ is any disk with diameter $\delta_i$ that contains $u$ and that lies inside the closed disk delimited by $C_u$.
	
	Suppose, for a contradiction, that there exist two neighbors $u$ and $v$ of $u_T$ inside $\mathcal A_i$ such that the disks $\Delta_u$ and $\Delta_v$ intersect. Then $\pi_\Gamma(u,v)\geq 2^i$, since both the edges $uu_T$ and $vu_T$ are longer than $r_{i-1}=2^{i-1}$. By the triangular inequality, $||uv||_\Gamma\leq 2\cdot \delta_i=2^{i-1}/s$. Hence $\frac{\pi_\Gamma(u,v)}{||uv||_\Gamma}\geq 2s$, while the spanning ratio of $\Gamma$ is at most $s$. This contradiction proves that, for any two neighbors $u$ and $v$ of $u_T$ inside $\mathcal A_i$, the disks $\Delta_u$ and $\Delta_v$ do not intersect.
	
	The area of $\mathcal A_i$ is $\pi\cdot (r_i^2-r_{i-1}^2)=\pi\cdot (2^{2i}-2^{2i-2})=3\pi\cdot (2^{2i-2})$. Since each disk $\Delta_u$ lying inside $\mathcal A_i$ has area $\pi\cdot (2^{2i-6}/s^2)$ and does not intersect any different disk $\Delta_v$, it follows that $\mathcal A_i$ contains at most $\frac{3\pi\cdot (2^{2i-2})\cdot s^2}{\pi\cdot 2^{2i-6}}=48\cdot s^2$ distinct disks $\Delta_u$ and hence at most $48\cdot s^2$ neighbors of $u_T$. This proves the claim and concludes the proof of the theorem.
\end{proof}

\begin{corollary}
	Let $S$ be an $n$-vertex star. For any $s\geq 1$, any straight-line drawing of $S$ with spanning ratio at most $s$ has edge-length ratio in $2^{\Omega(n/s^2)}$.
\end{corollary}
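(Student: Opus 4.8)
The plan is to observe that the corollary is an immediate specialization of Theorem~\ref{th:lower}. An $n$-vertex star $S$ is itself a tree: it consists of a single central vertex adjacent to each of the $n-1$ leaves, and no other edges. In particular, the center is a vertex of degree $d=n-1$, so $S$ falls directly within the hypothesis of Theorem~\ref{th:lower}.

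First I would apply Theorem~\ref{th:lower} to the tree $T=S$, taking the high-degree vertex $u_T$ to be the center of the star. Since that center has degree $d=n-1$, the theorem asserts that, for any $s\geq 1$, any straight-line drawing of $S$ with spanning ratio at most $s$ has edge-length ratio in $2^{\Omega(d/s^2)}=2^{\Omega((n-1)/s^2)}$. It then only remains to absorb the additive constant into the asymptotic notation: for $n\geq 2$ we have $(n-1)/s^2=\Theta(n/s^2)$, so $2^{\Omega((n-1)/s^2)}=2^{\Omega(n/s^2)}$, which is exactly the claimed bound.

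I do not expect any real obstacle here, as the statement is a direct corollary rather than a new argument: the entire geometric content—the disjoint-disk packing of the center's neighbors across the dyadic annuli $\mathcal A_i$, and the consequent upper bound of $\mathcal O(s^2)$ neighbors per annulus—is already established in the proof of Theorem~\ref{th:lower}. The only thing worth remarking is that a star maximizes the degree of its center ($d=n-1$) among all $n$-vertex trees, which is precisely why the star exhibits the strongest possible instance of the lower bound, with the exponent scaling linearly in $n$ rather than merely in some degree parameter $d$.
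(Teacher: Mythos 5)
Your proposal is correct and matches the paper's intent exactly: the corollary is stated without proof precisely because it is the immediate specialization of Theorem~\ref{th:lower} to the star's center, which has degree $d=n-1$, and the passage from $\Omega((n-1)/s^2)$ to $\Omega(n/s^2)$ is routine.
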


The lower bound of Theorem~\ref{th:tough} follows from Theorem~\ref{th:lower} and from the fact that a tree with maximum degree $d$ has toughness $1/d$. This concludes the proof of Theorem~\ref{th:tough}.

We now prove that trees with bounded maximum degree admit planar straight-line drawings with constant spanning ratio and polynomial edge-length ratio. The cost of planarity is found in the dependence on the maximum degree, \mbox{which is worse than in Theorem~\ref{th:trees-upperbound-nonplanar}.}

\begin{theorem} \label{th:trees-upperbound} 
	For every $\epsilon>0$, every $n$-vertex tree $T$ with maximum degree $d$ admits a planar straight-line drawing whose spanning ratio is at most $1+\epsilon$ and whose edge-length ratio is in $\mathcal O\left((2n)^{2+(d-2)\cdot \log_2( 1+\lceil\frac{2}{\epsilon}\rceil)}\cdot  \log_2 n\right)$.
\end{theorem}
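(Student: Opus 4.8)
The plan is to adapt the recursive ``far-placement'' idea of Theorem~\ref{th:trees-upperbound-nonplanar} to a decomposition that is compatible with planarity, paying for the planarity constraint with an extra dependence on the maximum degree $d$. Write $\beta := 1 + \lceil 2/\epsilon\rceil$. First I would root $T$ at an arbitrary vertex $r$ and compute a heavy-light decomposition: at every vertex the \emph{heavy} child is one whose subtree has the maximum number of vertices, the heavy edges partition $T$ into vertex-disjoint \emph{heavy paths}, and following a light edge at least halves the current subtree size, so the recursion on light subtrees has depth at most $\log_2 n$. The recursion itself is on heavy paths: I draw the heavy path $P=(r=p_0,p_1,\dots,p_\ell)$ containing $r$ as a nearly straight, $x$-monotone polyline, and I hang each light subtree off its attachment vertex $p_i$ by drawing it recursively and translating (and, if needed, uniformly scaling) it into position. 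The crucial bookkeeping observation is that, since the edge to the parent and the edge to the heavy child already occupy two of the at most $d$ slots at an internal path vertex $p_i$, there are at most $d-2$ light children at $p_i$; this is exactly the source of the factor $(d-2)$ in the exponent.

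The geometry at each attachment vertex is the heart of the construction, and it is designed to meet two competing requirements simultaneously. For the spanning ratio, the pair realizing the worst ratio between two different light subtrees of $p_i$, or between a light subtree and the rest of $P$, has its unique tree path passing through $p_i$; to make such a path nearly straight I place the at most $d-2$ light subtrees of $p_i$ essentially along the line through $p_i$, alternating the two sides and at distances from $p_i$ that grow by a factor of $\beta$ from one to the next, exactly as in the tight star construction underlying Theorem~\ref{th:lower}. As in the proof of Theorem~\ref{th:trees-upperbound-nonplanar}, each light subtree is moreover pushed far enough from $p_i$ that its internal width is negligible compared with its distance from $p_i$, so that every cross-subtree pair, and every pair with one endpoint on $P$, has ratio at most $1+\epsilon$; pairs inside a single light subtree inherit the bound by induction, and pairs lying along $P$ are fine because $P$ is drawn nearly straight. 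For planarity, I keep the invariant that each recursively produced drawing lies inside a thin wedge with its root on the boundary of its bounding region in a prescribed direction; placing the light subtrees into pairwise-disjoint wedges on the correct side of $P$, with the connecting edge entering each subtree along its free boundary direction, guarantees that no two edges cross once the near-collinear positions are slightly perturbed.

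For the size analysis I would let $W(m)$ denote the maximum width produced by the procedure on an $m$-vertex tree and track how the $\beta$-spaced fan-out of up to $d-2$ light subtrees inflates the width: each attachment multiplies the contribution of its largest light subtree by at most $\beta^{\,d-2}$, every light subtree has at most $n/2$ vertices so the light recursion has depth at most $\log_2 n$, and summing the polynomially many contributions of the at most $n$ attachment vertices along the heavy paths yields $W(n)\in\mathcal O\left((2n)^{2}\cdot\beta^{(d-2)\log_2 n}\cdot\log_2 n\right)$; since $\beta^{(d-2)\log_2 n}=(2n)^{(d-2)\log_2\beta}$ up to constants, and all intervertex distances are at least $1$ while the longest edge is at most the diameter of the drawing, this is exactly the claimed edge-length ratio. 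The main obstacle I anticipate is precisely the reconciliation carried out in the second paragraph: the star lower bound (Theorem~\ref{th:lower}) forces the $\beta$-geometric, near-collinear spacing of the children of a high-degree vertex, yet strict collinearity is incompatible with a planar straight-line drawing, so the delicate part is perturbing the alternating-side, wedge-confined placement just enough to obtain planarity while keeping the fan-out blow-up at $\beta^{\,d-2}$ per level and preserving the near-straightness that the $1+\epsilon$ bound requires.
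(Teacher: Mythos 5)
Your combinatorial skeleton matches the paper's: the chain of heaviest children is exactly the heavy path along which the paper's recursion unrolls, the bound of $d-2$ light subtrees per attachment vertex is where the exponent's factor $(d-2)$ comes from in both arguments, and your width bookkeeping (a $\beta^{\,d-2}$ blow-up per light level, with light depth $\log_2 n$, summed over at most $n$ attachment vertices) reproduces the paper's recursion for $w(n)$. The gap is in the geometry of the placement. Arranging the light subtrees of $p_i$ near-collinearly on alternating sides of a line through $p_i$ with $\beta$-geometric distances only controls pairs whose tree path is a ``V'' at $p_i$ (two light subtrees of the same $p_i$, or a light subtree and $p_i$ itself). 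It does not control a pair $u\in A$, $v\in B$ where $A$ hangs off $p_i$ and $B$ hangs off $p_{i'}$ with $i'\neq i$, or where $v$ lies on $P$ far from $p_i$: the tree path $u\rightsquigarrow p_i\rightsquigarrow p_{i'}\rightsquigarrow v$ concatenates legs whose directions you have not aligned. If the placement line is the line of $P$, the alternating-side rule forces half the subtrees to point back along $P$, and a vertex $u$ at distance $\rho$ from $p_i$ on the side of $p_{i'}$ with $\|p_ip_{i'}\|=2\rho$ gives $\pi(u,p_{i'})\approx 3\rho$ against $\|up_{i'}\|\approx\rho$. If the line is perpendicular to $P$, two perpendicular legs of comparable length already give ratio close to $\sqrt{2}>1+\epsilon$. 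Your appeal to the subtree being ``pushed far enough from $p_i$'' conflates far from $p_i$ with far from the rest of the drawing: in Theorem~\ref{th:trees-upperbound-nonplanar} the newly placed vertex is far from \emph{everything}, so one leg of every path dominates; a light subtree that is far from $p_i$ but at a distance comparable to $\|p_ip_{i'}\|$ has no dominating leg.

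The paper avoids this by not using a star-like placement at all: all subtrees are laid out below and to the right of the root in disjoint vertical strips, the heavy child stays on the root's level so the total height is at most $\log_2 n$, every root path is $xy$-monotone, and the horizontal gap before the $j$-th subtree is $\gamma\cdot(d_{j-1}+\log_2 n)$, i.e.\ $\gamma$ times everything that could contribute to a detour. Then $\pi_\Gamma(u,v)=\pi_\Gamma(u,r)+\pi_\Gamma(v,r)$ is bounded by horizontal plus vertical extents, the gap absorbs both, and every cross-subtree pair gets ratio $(\gamma+2)/\gamma\leq 1+\epsilon$ uniformly; planarity is immediate from the disjoint strips, with no perturbation step. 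Note also that Theorem~\ref{th:lower} does not force near-collinear, alternating children: it only forces an edge-length ratio exponential in the degree, which both constructions pay, and the paper's children of a single vertex all lie on one horizontal line spaced geometrically in a single direction. To repair your proof you would need to replace the alternating-side placement by a globally monotone layout of this kind; as written, the $1+\epsilon$ claim for pairs spanning two different attachment vertices does not hold.
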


\begin{proof}
	Let $\gamma=\lceil \frac{2}{\epsilon} \rceil$. If $d\leq 2$, then $T$ is a path and a planar straight-line drawing with spanning ratio $1$ and edge-length ratio $1$ is trivially constructed. We can hence assume that $d\geq 3$. Root $T$ at any leaf $r$; this ensures that every vertex of $T$ has at most $d-1$ children. 
	In order to avoid some technicalities in the upcoming algorithm, we also assume that every non-leaf vertex of $T$ has at least two children. This is obtained by inserting a new child for each vertex of $T$ with just one child; note that the \emph{size} of the tree, i.e., its number of vertices, is less than doubled by this modification. We again call $T$ the tree after this modification and by $n$ its size. Clearly, no path between two vertices of the initial tree uses the newly inserted vertices, hence removing the inserted vertices together with their incident edges from a drawing with spanning ratio smaller than or equal to $1+\epsilon$ of the modified tree results in a drawing with spanning ratio smaller than or equal to $1+\epsilon$ of the initial tree.
	
	Our construction is a ``well-spaced'' version of an algorithm by Shiloach~\cite{s-lpag-76}. Namely, we construct a planar straight-line drawing $\Gamma$ of $T$ in which (i) $r$ is at the top-left corner of $\mathcal B(\Gamma)$, and (ii) for every vertex $u$ of $T$, the path from $u$ to $r$ in $T$ is (non-strictly) $xy$-monotone. 
	
	If $n=1$, then $\Gamma$ is obtained by placing $r$ at any point in the plane. 
	If $n>1$, then let $r_1,r_2,\dots,r_k$ be the children of $r$, where $k\leq d-1$, let $T_1,T_2,\dots,T_k$ be the subtrees of $T$ rooted at $r_1,r_2,\dots,r_k$, and let $n_1,n_2,\dots,n_k$ be the size of $T_1,T_2,\dots,T_k$, respectively. Assume, w.l.o.g.\ up to a relabeling, that $n_1\leq n_2 \leq \dots \leq n_k$; hence, $n_i\leq n/2$ for $i=1,2,\dots,k-1$. Refer to Figure~\ref{fig:trees}. Place $r$ at any point in the plane. Inductively construct planar straight-line drawings $\Gamma_1,\Gamma_2,\dots,\Gamma_k$ of $T_1,T_2,\dots,T_k$, respectively. 
	Position $\Gamma_1$ so that $r_1$ is on the same vertical line as $r$, one unit below it; let $d_1$ be the width of $\Gamma_1$. Then, for $i=2,\dots,k$, position $\Gamma_i$ so that $r_i$ is one unit below $r$ and $\gamma\cdot (d_{i-1}+ \log_2 n)$ units to the right of $\mathcal B_r(\Gamma_{i-1})$; denote by $d_i$ the width of the bounding box of the drawings $\Gamma_1,\Gamma_2,\dots,\Gamma_i$. Finally, move $\Gamma_k$ one unit above, so that $r_k$ is on the same horizontal line as $r$.

	\begin{figure}[htb]\tabcolsep=4pt
		\centering
		\includegraphics[scale=1.2]{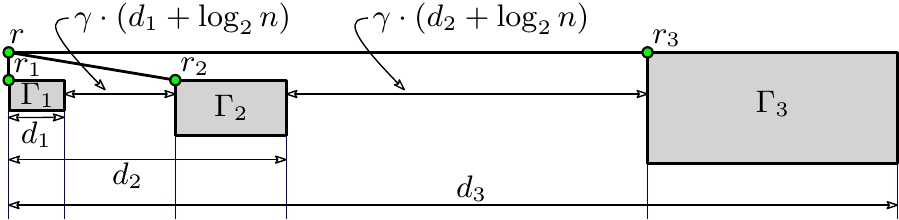}
		\caption{Inductive construction of $\Gamma$. In this example $k=3$.}
		\label{fig:trees}
	\end{figure}
	
	We now analyze the properties of $\Gamma$. By construction, we have that $\Gamma$ is a straight-line drawing. The planarity of $\Gamma$ is easily proved by exploiting the fact that $r_i$ is at the top-left corner of $\mathcal B(\Gamma_i)$ and that $r_1,r_2,\dots,r_{k-1}$ all lie one unit below $r$.
	
	{\bf Height.} Denote by $h(n)$ the maximum height of a drawing of an $n$-vertex tree constructed by the above algorithm. The same analysis as in~\cite{s-lpag-76} shows that $h(n)\leq \log_2 n$. This comes from $h(1)=0$ and $h(n)\leq \max\{h(\frac{n}{2})+1,h(n-1)\}$ for $n\geq 2$.
	
	{\bf Spanning ratio.} We prove that, for any two vertices $u$ and $v$ that do not belong to the same subtree $T_i$, it holds true that $\frac{\pi_\Gamma(u,v)}{\|uv\|_\Gamma}\leq \frac{\gamma+2}{\gamma}$. This suffices to prove that the spanning ratio of $\Gamma$ is at most $\frac{\gamma+2}{\gamma}$. Suppose that $u$ belongs to a subtree $T_i$ and $v$ belongs to a subtree $T_j$, with $i<j$; the case in which one of $u$ and $v$ is $r$ can be discussed analogously. 
	
	First, we have $\|uv\|\geq x_v + \gamma\cdot (d_{j-1}+ \log_2 n)$, where $x_v$ denotes the distance between $v$ and $\mathcal B_l(\Gamma_j)$, while the second term is the distance between $\mathcal B_l(\Gamma_j)$ and $\mathcal B_r(\Gamma_{j-1})$. 
	
	Clearly, we have $\pi_\Gamma(u,v) = \pi_\Gamma(u,r) + \pi_\Gamma(r,v)$. The path between $u$ and $r$ (between $v$ and $r$) is $xy$-monotone, hence $\pi_\Gamma(u,r)$ (resp.\ $\pi_\Gamma(v,r)$) is upper bounded by the horizontal distance plus the vertical distance between $u$ and $r$ (resp.\ between $v$ and $r$). The vertical distance between $u$ and $r$ (between $v$ and $r$) is at most $\log_2(n)$, since the height of $\Gamma$ is at most $\log_2(n)$. The horizontal distance between $u$ and $r$ is at most $d_i\leq d_{j-1}$, while the one between $v$ and $r$ is $x_v + \gamma\cdot (d_{j-1}+  \log_2 n) + d_{j-1}$. Hence, $\pi_\Gamma(u,v)\leq (d_{j-1}+ \log_2 n) + (x_v + \gamma\cdot (d_{j-1}+ \log_2 n) + d_{j-1} + \log_2 n )  = x_v + (\gamma+2) \cdot (d_{j-1}+ \log_2 n)$. Thus:
	\begin{eqnarray*}
		\frac{\pi_\Gamma(u,v)}{\|uv\|_\Gamma}\leq \frac{(\gamma+2) \cdot (\frac{x_v}{\gamma} + d_{j-1}+ \log_2 n)}{ \gamma\cdot (\frac{x_v}{\gamma} + d_{j-1}+ \log_2 n)}\leq \frac{\gamma+2}{\gamma}\leq 1+\epsilon.
	\end{eqnarray*}	
	
	{\bf Width.} Let $w_1,\dots,w_k$ be the widths of $\Gamma_1,\dots,\Gamma_k$. By construction, $d_1=w_1$ and, for each $j=2,\dots,k$, we have $d_j=d_{j-1} + \gamma\cdot (d_{j-1}+ \log_2 n) + w_j=(\gamma+1)\cdot d_{j-1} + \gamma\cdot \log_2 n + w_j$. Hence, by induction on $j$, we have $d_j=(\gamma+1)^{j-1} \cdot w_1 + (\gamma+1)^{j-2} \cdot w_2 + \ldots + (\gamma+1) \cdot  w_{j-1}+w_j + ((\gamma+1)^{j-1}-1) \cdot \log_2 n$.
	In particular, the width of $\Gamma$ is equal to $d_k$ and hence to:
	\begin{eqnarray} \label{eq:k>1}
	\sum\limits_{i=1}^{k} ((\gamma+1)^{k-i} \cdot w_i) + ((\gamma+1)^{k-1}-1) \cdot \log_2 n .
	\end{eqnarray} 
	
	For the reminder of the proof, we introduce the notation $k_1=k$ and $n_{1,i}=n_i$, for $i=1,2,\dots,k_1$. Recall that $k_1\leq d-1$. Denote by $w(n)$ the maximum width of a drawing of an $n$-vertex tree constructed by the above algorithm. By construction, we have $w(1)=0$. For $n\geq 2$, by Equality~\ref{eq:k>1}, we get:
	\begin{eqnarray} \label{eq:degree}
	w(n)\leq (\gamma+1)^{d-2} \cdot \sum\limits_{i=1}^{k_1-1} w(n_{1,i}) + w(n_{1,k_1}) + (\gamma+1)^{d-2} \cdot \log_2 n.
	\end{eqnarray} 
	
	Let $r_{2,1},r_{2,2},\dots,r_{2,k_2}$ be the children of $r_k$, where $k_2\leq d-1$, and let $n_{2,1},n_{2,2},\dots,n_{2,k_2}$ be size of the subtrees $T_{2,1},T_{2,2},\dots,T_{2,k_2}$ of $T$ rooted $r_{2,1},r_{2,2},\dots,r_{2,k_2}$, respectively. Assume, w.l.o.g., that $n_{2,1}\leq n_{2,2} \leq \dots \leq n_{2,k_2}$; hence, $n_{2,i}\leq n/2$ for $i=1,2,\dots,k_2-1$. By the same argument used to derive Inequality~\ref{eq:degree}, we get that the term $w(n_{1,k_1})$ in Inequality~\ref{eq:degree} can be replaced by $(\gamma+1)^{d-2} \cdot \sum\limits_{i=1}^{k_2-1} w(n_{2,i}) + w(n_{2,k_2}) + (\gamma+1)^{d-2} \cdot \log_2 n$, hence we get
	\begin{eqnarray} \label{eq:degree-2}
	w(n)\leq (\gamma+1)^{d-2} \cdot \left(\sum\limits_{i=1}^{k_1-1} w(n_{1,i}) + \sum\limits_{i=1}^{k_2-1} w(n_{2,i})\right) + w(n_{2,k_2}) + 2\cdot (\gamma+1)^{d-2} \cdot \log_2 n.
	\end{eqnarray} 
	
	Again, the term $w(n_{2,k_2})$ in Inequality~\ref{eq:degree-2} can be replaced by $(\gamma+1)^{d-2} \cdot \sum\limits_{i=1}^{k_3-1} w(n_{3,i}) + w(n_{3,k_3}) + (\gamma+1)^{d-2} \cdot \log_2 n$, where $n_{3,1},n_{3,2},\dots,n_{3,k_3}$ are the sizes of the subtrees of $T$ rooted at the children of $r_{2,k_2}$, with $k_3\leq d-1$ and  $n_{3,i}\leq n/2$ for $i=1,2,\dots,k_3-1$. The repetition of this argument eventually leads to the inequality:
	\begin{eqnarray*} 
		w(n)\leq (\gamma+1)^{d-2} \cdot \left(\sum\limits_{i=1}^{k_1-1} w(n_{1,i}) + \sum\limits_{i=1}^{k_2-1} w(n_{2,i}) +  \dots + \sum\limits_{i=1}^{k_t-1} w(n_{t,i})\right) + t\cdot (\gamma+1)^{d-2} \cdot \log_2 n,
	\end{eqnarray*} 
	
	\noindent where $t$ is the index at which $r_{t,k_t}$ has no children, hence $n_{t,k_t}=1$ and $w(n_{t,k_t})=0$. Since $t\leq n-1$, we get
	\begin{eqnarray} \label{eq:substitution}
	w(n)\leq (\gamma+1)^{d-2} \cdot \sum\limits_{i,j} w(n_{j,i}) + (\gamma+1)^{d-2} \cdot (n-1)\cdot \log_2 n,
	\end{eqnarray} 
	
	\noindent where the sum is defined over all pair of integers $j$ and $i$ such that $1\leq j \leq t$ and $1\leq i\leq k_j-1$. 
	
	We prove, by induction on $n$, that $w(n)\leq f(n):=\left((\gamma+1)^{d-2}\right)^{\log_2 n} \cdot n^2\cdot \log_2 n$. This is trivial when $n=1$, given that $w(1)=0$. Assume now that $n>1$. By Inequality~\ref{eq:substitution} and by induction, we get $w(n)\leq (\gamma+1)^{d-2} \cdot \sum\limits_{j,i}\left(\left((\gamma+1)^{d-2}\right)^{\log_2 n_{j,i}} \cdot n_{j,i}^2\cdot \log_2 n_{j,i}\right) + (\gamma+1)^{d-2} \cdot (n-1)\cdot \log_2 n$. Since $n_{j,i}\leq n/2<n$, we get $w(n)\leq (\gamma+1)^{d-2} \cdot \left((\gamma+1)^{d-2}\right)^{\log_2 (n/2)} \cdot \sum\limits_{j,i} n_{j,i}^2  \cdot \log_2 n + (\gamma+1)^{d-2} \cdot (n-1)\cdot \log_2 n=\left((\gamma+1)^{d-2}\right)^{\log_2 n} \cdot \sum\limits_{j,i} n_{j,i}^2  \cdot \log_2 n + (\gamma+1)^{d-2} \cdot (n-1)\cdot \log_2 n$. Since any two subtrees $T_{j,i}$ are disjoint, we get that $\sum n_{j,i} \leq n-1$, and hence $\sum\limits_{j,i} n_{j,i}^2\leq (n-1)^2$. Thus, $w(n)\leq \left((\gamma+1)^{d-2}\right)^{\log_2 n} \cdot ((n-1)^2+(n-1))\cdot \log_2 n \leq \left((\gamma+1)^{d-2}\right)^{\log_2 n} \cdot n^2\cdot  \log_2 n$. This completes the induction and the analysis of the width of $\Gamma$.
	
	{\bf Edge-length ratio.} By construction, the length of each edge connecting $r$ to a child is larger than or equal to $1$, hence the same is true for every edge of $T$. Thus, the edge-length ratio of $\Gamma$ is upper bounded by the maximum length of an edge of $T$. In turn, this is at most the sum of the height plus the width of $\Gamma$, which is in $\mathcal O\left(\left((\gamma+1)^{d-2}\right)^{\log_2 n} \cdot n^2\cdot  \log_2 n\right)$, as proved above. The factor $\left((\gamma+1)^{d-2}\right)^{\log_2 n}$ can be rewritten as $n^{(d-2)\cdot \log_2(\gamma+1)}$. The bound claimed in the statement is then obtained by substituting $\gamma=\lceil \frac{2}{\epsilon}\rceil$ and by observing that the value of $n$ used in the calculations is at most twice the size of the initial tree.	
\end{proof}

\section{Open Problems} \label{se:open-problems}


Our research raises a number of open problems which might be worth studying. 

First, the bounds in Theorem~\ref{th:tough} relating the toughness to the edge-length ratio of a drawing with constant spanning ratio are not tight; it would hence be \mbox{interesting to improve them.}

Second, we believe that there is still much to be understood about the edge-length ratio of planar straight-line drawings with constant spanning ratio. Theorem~\ref{th:trees-upperbound} shows that planar straight-line drawings with constant spanning ratio and polynomial edge-length ratio exist for bounded-degree trees. We also observe that every $n$-vertex $2$-connected outerplanar graph~$G$ admits a planar straight-line drawing with spanning ratio at most $\sqrt 2$ and edge-length ratio in $\mathcal O(n^{1.5})$; this can be achieved by placing the vertices of $G$, in the order given by the Hamiltonian cycle of $G$, at the vertices of a lattice $xy$-monotone polygonal curve; see, e.g.,~\cite{az-mne-95}. Further, it is known that Schnyder drawings are $2$-spanners~\cite{bfvv-cr-12}, hence every $n$-vertex \mbox{$3$-connected} planar graph admits a planar straight-line drawing with spanning ratio at most~$2$ and edge-length ratio in $\mathcal O(n)$; see~\cite{bfm-cd-07,s-epgg-90}. Do $3$-connected planar graphs (or even just maximal planar graphs) admit planar straight-line drawings with spanning ratio smaller than $2$ (and possibly arbitrarily close to $1$) and polynomial edge-length ratio? Is it possible to extend Theorem~\ref{th:tough} by proving that a planar straight-line drawing with constant spanning ratio and polynomial edge-length ratio exists \mbox{for every planar graph with bounded toughness?}

\bibliographystyle{splncs03} 
\bibliography{bibliography}

\begin{thebibliography}{10}
\providecommand{\url}[1]{\texttt{#1}}
\providecommand{\urlprefix}{URL }

\bibitem{aam-agp-18}
Abrahamsen, M., Adamaszek, A., Miltzow, T.: The art gallery problem is
  {$\exists\mathbb{R}$}-complete. In: Diakonikolas, I., Kempe, D., Henzinger,
  M. (eds.) 50th Annual Symposium on Theory of Computing ({STOC} 2018). pp.
  65--73. {ACM} (2018)

\bibitem{az-mne-95}
Acketa, D.M., Zunic, J.D.: On the maximal number of edges of convex digital
  polygons included into an $m \times m$-grid. Journal of Combinatorial Theory,
  Series {A}  69(2),  358--368 (1995)

\bibitem{acglp-sag-12}
Alamdari, S., Chan, T.M., Grant, E., Lubiw, A., Pathak, V.: Self-approaching
  graphs. In: Didimo, W., Patrignani, M. (eds.) 20th International Symposium on
  Graph Drawing (GD~'12). LNCS, vol. 7704, pp. 260--271. Springer (2013)

\bibitem{acdfp-mdg-12}
Angelini, P., Colasante, E., {Di Battista}, G., Frati, F., Patrignani, M.:
  Monotone drawings of graphs. Journal of Graph Algorithms and Applications
  16(1),  5--35 (2012)

\bibitem{adf-sgd-12}
Angelini, P., {Di Battista}, G., Frati, F.: Succinct greedy drawings do not
  always exist. Networks  59(3),  267--274 (2012)

\bibitem{adk-mdt-15}
Angelini, P., Didimo, W., Kobourov, S.G., Mchedlidze, T., Roselli, V.,
  Symvonis, A., Wismath, S.K.: Monotone drawings of graphs with fixed
  embedding. Algorithmica  71(2),  233--257 (2015)

\bibitem{afg-acgdt-10}
Angelini, P., Frati, F., Grilli, L.: An algorithm to construct greedy drawings
  of triangulations. Journal of Graph Algorithms and Applications  14(1),
  19--51 (2010)

\bibitem{bbc-mco-11}
Badent, M., Brandes, U., Cornelsen, S.: More canonical ordering. Journal of
  Graph Algorithms and Applications  15(1),  97--126 (2011)

\bibitem{d-cg-97}
de~Berg, M., van Kreveld, M., Overmars, M., Schwarzkopf, O.: Computational
  geometry: algorithms and applications. Springer (1997)

\bibitem{bbcklv-gt-16}
Bonichon, N., Bose, P., Carmi, P., Kostitsyna, I., Lubiw, A., Verdonschot, S.:
  Gabriel triangulations and angle-monotone graphs: Local routing and
  recognition. In: Hu, Y., N{\"{o}}llenburg, M. (eds.) 24th International
  Symposium on Graph Drawing and Network Visualization ({GD}~'16). LNCS, vol.
  9801, pp. 519--531. Springer (2016)

\bibitem{bfm-cd-07}
Bonichon, N., Felsner, S., Mosbah, M.: Convex drawings of 3-connected plane
  graphs. Algorithmica  47(4),  399--420 (2007)

\bibitem{bdlsv-adt-11}
Bose, P., Devroye, L., L{\"{o}}ffler, M., Snoeyink, J., Verma, V.: Almost all
  {D}elaunay triangulations have stretch factor greater than $\pi/2$.
  Computational Geometry: Theory and Applications  44(2),  121--127 (2011)

\bibitem{bfvv-cr-12}
Bose, P., Fagerberg, R., van Renssen, A., Verdonschot, S.: Competitive routing
  in the half-$\theta_6$-graph. In: Rabani, Y. (ed.) 23rd Annual {ACM-SIAM}
  Symposium on Discrete Algorithms ({SODA} 2012). pp. 1319--1328 (2012)

\bibitem{bfvv-pcbds-19}
Bose, P., Fagerberg, R., van Renssen, A., Verdonschot, S.: On plane constrained
  bounded-degree spanners. Algorithmica  81(4),  1392--1415 (2019)

\bibitem{bs-pgs-13}
Bose, P., Smid, M.H.M.: On plane geometric spanners: {A} survey and open
  problems. Computational Geometry: Theory and Applications  46(7),  818--830
  (2013)

\bibitem{c-agcp-88}
Canny, J.F.: Some algebraic and geometric computations in {PSPACE}. In: Simon,
  J. (ed.) 20th Annual {ACM} Symposium on Theory of Computing ({STOC} 1988).
  pp. 460--467. {ACM} (1988)

\bibitem{ch-rcpvg-17}
Cardinal, J., Hoffmann, U.: Recognition and complexity of point visibility
  graphs. Discrete {\&} Computational Geometry  57(1),  164--178 (2017)

\bibitem{c-tpg-89}
Chew, P.: There are planar graphs almost as good as the complete graph. Journal
  of Computer and System Sciences  39(2),  205--219 (1989)

\bibitem{ddf-pg-17}
{Da Lozzo}, G., D'Angelo, A., Frati, F.: On planar greedy drawings of
  3-connected planar graphs. In: Aronov, B., Katz, M.J. (eds.) 33rd
  International Symposium on Computational Geometry ({SoCG} 2017). LIPIcs,
  vol.~77, pp. 33:1--33:16. Schloss Dagstuhl - Leibniz-Zentrum fuer Informatik
  (2017)

\bibitem{dfg-icgps-15}
Dehkordi, H.R., Frati, F., Gudmundsson, J.: Increasing-chord graphs on point
  sets. Journal of Graph Algorithms and Applications  19(2),  761--778 (2015)

\bibitem{dfs-dg-90}
Dobkin, D.P., Friedman, S.J., Supowit, K.J.: Delaunay graphs are almost as good
  as complete graphs. Discrete {\&} Computational Geometry  5,  399--407 (1990)

\bibitem{desw-dpgfss-07}
Dujmovi\'c, V., Eppstein, D., Suderman, M., Wood, D.R.: Drawings of planar
  graphs with few slopes and segments. Computational Geometry: Theory and
  Applications  38(3),  194--212 (2007)

\bibitem{dg-lbd-16}
Dumitrescu, A., Ghosh, A.: Lower bounds on the dilation of plane spanners.
  International Journal of Computational Geometry and Applications  26(2),
  89--110 (2016)

\bibitem{eg-sggruhg-11}
Eppstein, D., Goodrich, M.T.: Succinct greedy geometric routing using
  hyperbolic geometry. {IEEE} Transactions on Computers  60(11),  1571--1580
  (2011)

\bibitem{fikkms-smdpg-16}
Felsner, S., Igamberdiev, A., Kindermann, P., Klemz, B., Mchedlidze, T.,
  Scheucher, M.: Strongly monotone drawings of planar graphs. In: 32nd
  International Symposium on Computational Geometry ({SoCG} '16). LIPIcs,
  vol.~51, pp. 37:1--37:15. Schloss Dagstuhl - Leibniz-Zentrum fuer Informatik
  (2016)

\bibitem{dpp-hdpgg-90}
de~Fraysseix, H., Pach, J., Pollack, R.: How to draw a planar graph on a grid.
  Combinatorica  10(1),  41--51 (1990)

\bibitem{gj-np-79}
Garey, M.R., Johnson, D.S.: Computers and Intractability: {A} Guide to the
  Theory of {NP}-Completeness. W. H. Freeman (1979)

\bibitem{gjt-phc-76}
Garey, M.R., Johnson, D.S., Tarjan, R.E.: The planar {H}amiltonian circuit
  problem is {NP}-complete. {SIAM} Journal on Computing  5(4),  704--714 (1976)

\bibitem{hh-omdt-17}
He, D., He, X.: Optimal monotone drawings of trees. {SIAM} Journal on Discrete
  Mathematics  31(3),  1867--1877 (2017)

\bibitem{hh-md3pg-15}
He, X., He, D.: Monotone drawings of 3-connected plane graphs. In: Bansal, N.,
  Finocchi, I. (eds.) 23rd Annual European Symposium on Algorithms ({ESA}
  2015). LNCS, vol. 9294, pp. 729--741. Springer (2015)

\bibitem{hz-osgdptt-14}
He, X., Zhang, H.: On succinct greedy drawings of plane triangulations and
  3-connected plane graphs. Algorithmica  68(2),  531--544 (2014)

\bibitem{hr-gstgd-15}
Hossain, M.I., Rahman, M.S.: Good spanning trees in graph drawing. Theoretical
  Computer Science  607,  149--165 (2015)

\bibitem{ikl-sac-99}
Icking, C., Klein, R., Langetepe, E.: Self-approaching curves. Mathematical
  Proceedings of the Cambridge Philosophical Society  125(3),  441--453 (1999)

\bibitem{k-dpgco-96}
Kant, G.: Drawing planar graphs using the canonical ordering. Algorithmica
  16(1),  4--32 (1996)

\bibitem{kssw-omdt-14}
Kindermann, P., Schulz, A., Spoerhase, J., Wolff, A.: On monotone drawings of
  trees. In: Duncan, C.A., Symvonis, A. (eds.) 22nd International Symposium on
  Graph Drawing ({GD} '14). LNCS, vol. 8871, pp. 488--500. Springer (2014)

\bibitem{lm-srgems-10}
Leighton, T., Moitra, A.: Some results on greedy embeddings in metric spaces.
  Discrete {\&} Computational Geometry  44(3),  686--705 (2010)

\bibitem{lm-clr-19}
Lubiw, A., Mondal, D.: Construction and local routing for angle-monotone
  graphs. Journal of Graph Algorithms and Applications  23(2),  345--369 (2019)

\bibitem{m-ut-88}
Mn\"ev, N.E.: The universality theorems on the classification problem of
  configuration varieties and convex polytopes varieties. In: Viro, O.Y. (ed.)
  Topology and geometry: Rohlin Seminar, Lecture Notes in Mathematics, vol.
  1346, pp. 527--544. Springer-Verlag, Berlin (1988)

\bibitem{m-mdt-04}
Mulzer, W.: Minimum dilation triangulations for the regular $n$-gon. Master's
  thesis, Freie Universit\"at Berlin (2004)

\bibitem{np-egdt-13}
N{\"{o}}llenburg, M., Prutkin, R.: Euclidean greedy drawings of trees. Discrete
  {\&} Computational Geometry  58(3),  543--579 (2017)

\bibitem{npr-osaicd-16}
N{\"{o}}llenburg, M., Prutkin, R., Rutter, I.: On self-approaching and
  increasing-chord drawings of 3-connected planar graphs. Journal of
  Computational Geometry  7(1),  47--69 (2016)

\bibitem{pr-crgr-05}
Papadimitriou, C.H., Ratajczak, D.: On a conjecture related to geometric
  routing. Theoretical Computer Science  344(1),  3--14 (2005)

\bibitem{rpss-grwli-03}
Rao, A., Papadimitriou, C.H., Shenker, S., Stoica, I.: Geographic routing
  without location information. In: Johnson, D.B., Joseph, A.D., Vaidya, N.H.
  (eds.) 9th Annual International Conference on Mobile Computing and Networking
  ({MOBICOM} '03). pp. 96--108. {ACM} (2003)

\bibitem{r-cic-94}
Rote, G.: Curves with increasing chords. Mathematical Proceedings of the
  Cambridge Philosophical Society  115(1),  1--12 (1994)

\bibitem{s-csgtp-09}
Schaefer, M.: Complexity of some geometric and topological problems. In:
  Eppstein, D., Gansner, E.R. (eds.) 17th International Symposium on Graph
  Drawing ({GD}~'09). LNCS, vol. 5849, pp. 334--344. Springer (2010)

\bibitem{s-epgg-90}
Schnyder, W.: Embedding planar graphs on the grid. In: Johnson, D.S. (ed.) 1st
  Annual {ACM-SIAM} Symposium on Discrete Algorithms ({SODA} '90). pp. 138--148
  (1990)

\bibitem{s-lpag-76}
Shiloach, Y.: Linear and Planar Arrangements of Graphs. Ph.D. thesis, Weizmann
  Institute of Science (1976)

\bibitem{v-ucvc-81}
Valiant, L.G.: Universality considerations in {VLSI} circuits. {IEEE}
  Transaction on Computers  30(2),  135--140 (1981)

\bibitem{wh-sscgd3pg-14}
Wang, J.J., He, X.: Succinct strictly convex greedy drawing of 3-connected
  plane graphs. Theoretical Computer Science  532,  80--90 (2014)

\bibitem{w-cbet-89}
Win, S.: On a connection between the existence of $k$-trees and the toughness
  of a graph. Graphs and Combinatorics  5(1),  201--205 (1989)

\bibitem{x-sfd-13}
Xia, G.: The stretch factor of the {D}elaunay triangulation is less than
  $1.998$. Computational Geometry: Theory and Applications  42(4),  1620--1659
  (2013)

\bibitem{xz-ttb-11}
Xia, G., Zhang, L.: Toward the tight bound of the stretch factor of {D}elaunay
  triangulations. In: 23rd Annual Canadian Conference on Computational Geometry
  ({CCCG} 2011) (2011)

\end{thebibliography}

\end{document}